\documentclass[preprint]{elsarticle}

\journal{Journal of Multivariate Analysis}

\usepackage{amsmath,amssymb,amsthm,mathtools,bm}
\usepackage{graphicx}
\usepackage{booktabs}
\usepackage{multirow}
\usepackage{array}
\usepackage{algorithm}
\usepackage{algpseudocode}
\usepackage{siunitx}
\usepackage{adjustbox}
\usepackage{threeparttable}
\usepackage{hyperref}
\hypersetup{colorlinks=true,linkcolor=blue,citecolor=blue,urlcolor=blue}
\newcommand{\E}{\mathbb{E}}

\newcommand{\R}{\mathbb{R}}
\newcommand{\I}{\mathbf{I}}
\newcommand{\tr}{\mathrm{tr}}

\newcommand{\KL}{\mathrm{KL}}
\newcommand{\TV}{\mathrm{TV}}
\newcommand{\op}{\mathrm{op}}
\newcommand{\Fhyper}{{}_2F_1}

\theoremstyle{plain}
\newtheorem{theorem}{Theorem}
\newtheorem{lemma}{Lemma}
\newtheorem{proposition}{Proposition}

\theoremstyle{remark}
\newtheorem{remark}{Remark}
\theoremstyle{definition}
\newtheorem{definition}{Definition}

\begin{document}

\begin{frontmatter}

\title{HIMCE: High-dimensional multiple imputation via covariance-mode updating for neuroimaging and spatiotemporal blocks}

\author[ucf]{Hsin-Hsiung Huang\corref{cor1}}
\ead{Hsin.Huang@ucf.edu}
\author[tno,uu]{Stef van Buuren}
\ead{S.vanBuuren@uu.nl}
\cortext[cor1]{Corresponding author.}

\affiliation[ucf]{organization={School of Data, Mathematical, and Statistical Sciences, University of Central Florida}, city={Orlando}, state={FL}, country={USA}}
\affiliation[tno]{organization={Netherlands Organization for Applied Scientific Research (TNO)}, city={Leiden}, country={The Netherlands}}
\affiliation[uu]{organization={Department of Methodology \& Statistics, Faculty of Social Sciences (FSS), Utrecht University}, city={Utrecht}, country={The Netherlands}}

\begin{abstract}
High-dimensional neuroimaging and spatiotemporal blocks often contain structured missingness from acquisition artifacts, preprocessing failures, and sensor dropout. Multiple imputation propagates uncertainty, but fully conditional specification methods such as multivariate imputation by chained equations (MICE) can be slow or unstable when block dimension is large and correlations are strong. A multivariate normal (MVN) working model provides a coherent posterior predictive target and an exact data augmentation sampler, but repeated covariance sampling and matrix factorizations become costly in large dimensions.

We propose \textbf{H}igh-dimensional \textbf{I}mputation via covariance \textbf{M}ode and \textbf{C}hained \textbf{E}quations (\textbf{HIMCE}), a hybrid multiple-imputation procedure for continuous blocks. Relative to exact MVN data augmentation, HIMCE preserves the Gaussian conditional imputation law and propagates mean-parameter uncertainty through stochastic coefficient or local-ridge draws. In high-dimensional blocks, it approximates covariance uncertainty through covariance-mode updating, optionally with a scalar bridge; in small blocks, it can restore exact covariance uncertainty through a conditional inverse-Wishart refresh.

We record the exact Bayesian reference sampler and prove fixed-dimensional posterior consistency and asymptotic equivalence of mode plug-in prediction in total variation. We also develop diagnostics based on randomized rank-cell probability integral transform (PIT), PIT-consistent empirical coverage, and marginal distribution overlays. In the primary spatial benchmark, HIMCE improves posterior-mean error relative to HIMA and screened MICE, runs at HIMA-like speed and below half the MICE runtime, and improves interval coverage over HIMA, although MICE remains better calibrated. A repeated low-dimensional NHANES illustration shows improved coverage with competitive point prediction.
\end{abstract}

\begin{keyword}
Multiple imputation \sep data augmentation \sep matrix normal model \sep empirical Bayes covariance \sep ridge regression \sep neuroimaging \sep spatiotemporal data
\end{keyword}

\end{frontmatter}

\section{Introduction}
Missing values are pervasive in neuroimaging and spatiotemporal studies. In functional MRI and diffusion pipelines, voxelwise or regionwise blocks may be partially missing due to motion artifacts, susceptibility distortions, registration failures, and quality-control exclusions. In longitudinal sensor arrays, intermittent dropout and preprocessing failures yield structured gaps across time and space. Analyses that discard incomplete records can be inefficient and can be biased when missingness depends on observed covariates and observed outcomes, while single imputation can substantially understate uncertainty. Multiple imputation (MI) addresses both issues by generating $M$ completed datasets, applying a complete-data analysis to each, and combining results in a way that reflects within-imputation and between-imputation uncertainty \citep{Rubin2004}.

Two broad families of MI procedures dominate practice. Fully conditional specification (FCS) approaches, including multivariate imputation by chained equations (MICE), iterate a collection of conditional models and are popular because they allow flexible modeling choices \citep{vanBuuren2011,vanBuuren2018}. In high-dimensional continuous blocks, however, MICE can become slow or unstable. Strong collinearity among block variables yields ill-conditioned regressions, and repeated estimation of high-dimensional conditional models can fail numerically or require aggressive predictor screening and regularization.

Joint-model MI under a multivariate normal (MVN) working model provides a coherent posterior predictive target and an exact data augmentation sampler \citep{Schafer1997}. The joint model is particularly appealing for continuous blocks because it respects cross-variable dependence and produces proper predictive uncertainty when the working model is calibrated. Yet the standard sampler relies on repeated inverse-Wishart updates of a $p\times p$ covariance matrix, along with repeated matrix factorizations for conditional Gaussian draws. When $p$ is large, or when $p$ is comparable to or larger than $n$, these steps can become computational bottlenecks and numerically delicate \citep{SchaferGraham2002}.

This paper develops HIMCE, a hybrid MI procedure for continuous blocks. HIMCE keeps the joint Gaussian block model and propagates uncertainty in the regression mean through stochastic coefficient or local-ridge draws. The high-dimensional branch replaces matrix-valued covariance sampling by a stabilized covariance-mode update. The resulting chain is not an exact joint-model sampler; it is a posterior-motivated stochastic approximation whose main computational saving is the replacement of repeated inverse-Wishart covariance draws by a mode plug-in. The practical implementation also includes a dimension-adaptive option: when the block is small and exact covariance sampling is inexpensive, the conditional inverse-Wishart refresh is restored.

A central theme is the preservation--approximation split induced by covariance-mode updating. Relative to exact MVN data augmentation, HIMCE preserves the Gaussian working model, the conditional Gaussian law for each missingness pattern, and stochastic propagation of mean-parameter uncertainty. In the high-dimensional branch it approximates covariance uncertainty through covariance-mode updating and, optionally, a one-dimensional scalar bridge around the mode. Thus predictive center and dominant conditional dependence are retained by the Gaussian structure, whereas full posterior uncertainty in the covariance matrix is deliberately compressed. This split motivates the paper's emphasis on pseudo-missing calibration diagnostics rather than point prediction alone.

The companion benchmark also highlights three practical points. First, in finite imputation ensembles the diagnostic layer matters: PIT histograms, empirical coverage, and marginal checks should all be computed from the same empirical predictive distribution. Second, finite-sample center and scale adjustments should be learned on observed cells, not tuned after inspecting withheld pseudo-missing truths. Third, comparisons with screened MICE should state the screening rule, number of retained predictors, number of iterations, number of imputations, and regularization choices, because those choices materially affect both stability and runtime.

The remainder is organized as follows. Section~\ref{sec:related} positions HIMCE relative to FCS and joint-model MI in high dimension. Section~\ref{sec:method} introduces the Gaussian block model, the conjugate reference posterior, HIMA as the deterministic covariance-mode baseline, and HIMCE as the stochastic covariance-mode chain. Section~\ref{sec:diagnostics} discusses multiple-imputation inference and practical distributional diagnostics. Section~\ref{sec:theory} records fixed-dimension theoretical results. Section~\ref{sec:sim} reports a spatially correlated pseudo-missing benchmark. Section~\ref{sec:nhanes} adds a low-dimensional NHANES illustration. Section~\ref{sec:practical} discusses practical implementation choices. The appendix records proofs and supporting remarks.

\section{Related work and positioning}
\label{sec:related}
Multiple imputation has a large literature; here we focus on ingredients most relevant for high-dimensional continuous blocks with strong dependence and structured missingness patterns.

\subsection{Multiple imputation and fully conditional specification}
Rubin's framework formalizes multiple imputation as Monte Carlo sampling from a posterior predictive distribution under an imputation model, followed by complete-data analysis and pooling to reflect within- and between-imputation uncertainty \citep{Rubin2004,LittleRubin2019}. A widely used practical strategy is fully conditional specification, in which one specifies a set of conditional models and iterates them to form a Markov chain whose stationary distribution is used as a working approximation for posterior predictive draws. MICE is the most common implementation of FCS in applied work \citep{vanBuuren2011,vanBuuren2018}. In moderate dimensions, MICE can be flexible and effective because each conditional model may be tailored to the variable type and scientific context.

High-dimensional blocks stress the standard MICE workflow. Each conditional regression can become ill conditioned when predictors are strongly collinear, and instability is amplified when $p$ is comparable to or larger than $n$. Repeatedly fitting $p$ conditional models can also be slow. In practice, high-dimensional MICE analyses therefore rely on predictor screening, regularization, or dimension reduction. These stabilizations can help, but they also alter the fitted conditional system and may perform poorly when the block dependence is dense.

\subsection{Joint-model MI for continuous data}
For continuous blocks, joint-model MI under a multivariate normal working model is a classical alternative to FCS \citep{Schafer1997}. The joint model induces coherent conditional distributions for any missingness pattern and admits an exact data augmentation sampler based on alternating conditional Gaussian draws for missing values and conjugate updates for regression and covariance parameters. In moderate dimensions, this yields stable multiple imputations with a clean posterior predictive interpretation. A recurring limitation in large blocks is that exact sampling requires repeated updates and factorizations of a $p\times p$ covariance matrix, which can be slow and numerically delicate when the covariance is ill conditioned or when $p>n$ \citep{SchaferGraham2002}.

\subsection{Covariance regularization and covariance-mode strategies}
High-dimensional covariance estimation is itself a stability problem. Shrinkage estimators regularize the sample covariance toward a well-conditioned target, such as a scaled identity, and can remain stable when $p>n$ \citep{LedoitWolf2004}. In the imputation context, a related idea is to avoid repeated inverse-Wishart covariance sampling and instead update the covariance by a stabilized point estimate or posterior mode that is guaranteed to be positive definite under mild regularization. Such covariance-mode strategies can substantially reduce runtime and numerical failures in large blocks, but they also move the algorithm away from exact posterior sampling, motivating calibration checks based on pseudo-missing experiments.

\subsection{High-dimensional neuroimaging imputation: HIMA and HIMCE}
Recent work has adapted covariance-mode ideas to neuroimaging blocks. The High-dimensional Imputation via covariance Mode Approximation (HIMA) approach \citep{lu2025new} is the deterministic baseline in our benchmark implementation. HIMA fits a Gaussian block model, but its covariance step is the empirical-Bayes approximate covariance mode described in the Human Brain Mapping paper. In the current companion benchmark we use that empirical-Bayes covariance-mode estimator as the HIMA covariance backbone.

HIMCE is complementary. It keeps the same Gaussian conditional imputation structure and, in the high-dimensional experiments, the same HIMA covariance backbone. It differs from HIMA by replacing deterministic mean updates with stochastic coefficient or screened local-ridge draws, and by allowing lightweight finite-sample calibration of stored draws using observed cells. The theoretical section does not attempt a fixed-dimension proof for the exact empirical-Bayes covariance estimator used by HIMA; instead it establishes asymptotic validity for a generic covariance-mode plug-in family that captures the same preservation--approximation principle.

\section{Model and methodology}
\label{sec:method}
\subsection{Setup, notation, and missingness}
Let $Y\in\R^{n\times p}$ denote a continuous block, such as a region of interest reshaped as subject by voxel, or a sensor by time block. Let $X\in\R^{n\times k}$ be a fixed design matrix of auxiliary covariates. For subject $i$, the row vector $Y_i\in\R^p$ contains observed and missing entries. Let $R\in\{0,1\}^{n\times p}$ denote missingness indicators with $R_{ij}=1$ if $Y_{ij}$ is observed.

We work under a missing-at-random assumption,
\begin{equation}
\mathbb{P}(R\mid Y,X)=\mathbb{P}(R\mid Y_{\mathrm{obs}},X).
\label{eq:mar}
\end{equation}
Under MAR and an imputation model $p(Y\mid X,\theta)$, MI targets posterior predictive draws
\begin{equation}
Y_{\mathrm{mis}}^{(m)} \sim p(Y_{\mathrm{mis}}\mid Y_{\mathrm{obs}},X)
= \int p(Y_{\mathrm{mis}}\mid Y_{\mathrm{obs}},X,\theta)\,p(\theta\mid Y_{\mathrm{obs}},X)\,d\theta.
\label{eq:pp_target}
\end{equation}
The goal is to approximate \eqref{eq:pp_target} in a way that remains stable when $p$ is large and the block is strongly correlated.

\subsection{Gaussian block model}
HIMCE adopts a Gaussian working model,
\begin{equation}
Y_i \mid X_i,B,\Sigma \sim \mathcal{N}_p(X_iB,\Sigma), \qquad i=1,\ldots,n,
\label{eq:model}
\end{equation}
where $B\in\R^{k\times p}$ is a regression coefficient matrix and $\Sigma\in\R^{p\times p}$ is a symmetric positive definite residual covariance. Equivalently,
\begin{equation}
Y \mid B,\Sigma \sim \mathrm{MN}_{n\times p}(XB,\I_n,\Sigma),
\label{eq:matrixnormal}
\end{equation}
where $\mathrm{MN}$ denotes a matrix normal distribution with row covariance $\I_n$ and column covariance $\Sigma$.

The Gaussian block model is a working model. In imaging applications it is typically applied after voxelwise standardization or another robust blockwise scaling step. Section~\ref{sec:practical} discusses practical checks and robustification.

\subsection{Prior and complete-data posterior}
We use the conjugate matrix-normal inverse-Wishart prior,
\begin{align}
\Sigma &\sim \mathrm{IW}(\nu_0,S_0),\label{eq:priorSigma}\\
B\mid \Sigma &\sim \mathrm{MN}_{k\times p}(B_0,V_0,\Sigma),\label{eq:priorB}
\end{align}
where $\nu_0>p+1$, $S_0\succ 0$, $B_0\in\R^{k\times p}$, and $V_0\succ0$. A convenient default in standardized blocks is $B_0=0$ and $V_0=(\alpha \I_k)^{-1}$ with $\alpha>0$.

For complete $Y$, define
\begin{equation}
V_n = (X^\top X + V_0^{-1})^{-1},\qquad
B_n = V_n(X^\top Y + V_0^{-1}B_0),
\label{eq:BnVn}
\end{equation}
and
\begin{equation}
S_n = S_0 + (Y - XB_n)^\top(Y - XB_n) + (B_n - B_0)^\top V_0^{-1}(B_n - B_0),\qquad
\nu_n = \nu_0 + n.
\label{eq:Sn}
\end{equation}

\begin{theorem}[Conjugate posterior under complete data]
\label{thm:conjugate_complete}
Under \eqref{eq:model}--\eqref{eq:priorB} with complete $Y$,
\begin{align}
\Sigma \mid Y,X &\sim \mathrm{IW}(\nu_n,S_n),\label{eq:sigma_post}\\
B\mid \Sigma,Y,X &\sim \mathrm{MN}_{k\times p}(B_n,V_n,\Sigma).
\label{eq:B_post}
\end{align}
\end{theorem}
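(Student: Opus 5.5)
The plan is to write the joint density $p(B,\Sigma\mid Y,X)\propto p(Y\mid B,\Sigma)\,p(B\mid\Sigma)\,p(\Sigma)$ in trace form and complete the square in $B$. Using the matrix-normal form \eqref{eq:matrixnormal}, the likelihood is proportional to
\begin{equation*}
|\Sigma|^{-n/2}\exp\!\Big\{-\tfrac12\tr\big[\Sigma^{-1}(Y-XB)^\top(Y-XB)\big]\Big\}.
\end{equation*}
The prior \eqref{eq:priorB} contributes $|\Sigma|^{-k/2}\exp\{-\tfrac12\tr[\Sigma^{-1}(B-B_0)^\top V_0^{-1}(B-B_0)]\}$, with the factor $|V_0|^{-p/2}$ absorbed into the constant. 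The prior \eqref{eq:priorSigma} contributes $|\Sigma|^{-(\nu_0+p+1)/2}\exp\{-\tfrac12\tr(\Sigma^{-1}S_0)\}$. The exponents combine into $-\tfrac12\tr[\Sigma^{-1}Q(B)]$, where
\begin{equation*}
Q(B)=S_0+(Y-XB)^\top(Y-XB)+(B-B_0)^\top V_0^{-1}(B-B_0).
\end{equation*}

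The key algebraic step is the identity
\begin{equation*}
Q(B)=(B-B_n)^\top V_n^{-1}(B-B_n)+S_n,
\end{equation*}
with $V_n,B_n$ as in \eqref{eq:BnVn} and $S_n$ as in \eqref{eq:Sn}. To verify it, expand $Q(B)$ as a quadratic in $B$. The quadratic term is $B^\top(X^\top X+V_0^{-1})B=B^\top V_n^{-1}B$. The linear term is $-B^\top(X^\top Y+V_0^{-1}B_0)$ plus its transpose, which equals $-B^\top V_n^{-1}B_n$ plus its transpose. The remainder $S_0+Y^\top Y+B_0^\top V_0^{-1}B_0-B_n^\top V_n^{-1}B_n$ must then be shown equal to $S_n$. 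For that, expand $S_n$ directly and use the normal equation $X^\top Y+V_0^{-1}B_0=V_n^{-1}B_n$. The cross terms $-Y^\top XB_n-B_n^\top X^\top Y-B_0^\top V_0^{-1}B_n-B_n^\top V_0^{-1}B_0$ collapse to $-2B_n^\top V_n^{-1}B_n$. The quadratic terms $B_n^\top(X^\top X+V_0^{-1})B_n$ contribute $+B_n^\top V_n^{-1}B_n$, which gives the remainder. I expect this bookkeeping of the constant term to be the main obstacle, though it is only careful expansion; all steps are valid because $V_0\succ0$ makes $V_n$ well defined and positive definite.

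With the identity in hand, the joint posterior kernel factors as
\begin{equation*}
|\Sigma|^{-k/2}\exp\!\Big\{-\tfrac12\tr\big[\Sigma^{-1}(B-B_n)^\top V_n^{-1}(B-B_n)\big]\Big\}\times|\Sigma|^{-(\nu_0+n+p+1)/2}\exp\!\Big\{-\tfrac12\tr(\Sigma^{-1}S_n)\Big\}.
\end{equation*}
The first factor, for fixed $\Sigma$, is the $\mathrm{MN}_{k\times p}(B_n,V_n,\Sigma)$ kernel, and its normalizing constant is $(2\pi)^{kp/2}|V_n|^{p/2}|\Sigma|^{k/2}$. This gives \eqref{eq:B_post}. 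Integrating $B$ out cancels the $|\Sigma|^{-k/2}$ exactly and leaves the $\mathrm{IW}(\nu_0+n,S_n)$ kernel, which gives \eqref{eq:sigma_post}. The inverse-Wishart distribution is proper because $\nu_n>p+1$ and $S_n\succeq S_0\succ0$. Finally, I will confirm that the inverse-Wishart parametrization used here is the one with density exponent $-(\nu+p+1)/2$, consistent with the requirement $\nu_0>p+1$.
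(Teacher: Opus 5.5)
Your proposal is correct and follows essentially the same route as the paper. Both multiply the likelihood and conjugate prior kernels into $|\Sigma|^{-(\nu_0+n+k+p+1)/2}\exp\{-\tfrac12\tr[\Sigma^{-1}Q(B)]\}$, complete the square as $Q(B)=S_n+(B-B_n)^\top V_n^{-1}(B-B_n)$, read off the matrix-normal conditional, and integrate out $B$ to cancel $|\Sigma|^{-k/2}$, leaving the $\mathrm{IW}(\nu_0+n,S_n)$ kernel. Your explicit check of the constant term via $V_n^{-1}B_n=X^\top Y+V_0^{-1}B_0$ fills in the bookkeeping that the paper summarizes only as ``collecting powers of $B$.''
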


For later use, the conditional posterior of $\Sigma$ given $B$ and complete $Y$ is inverse-Wishart,
\begin{equation}
\Sigma\mid B,Y,X \sim \mathrm{IW}\!\left(\nu_0+n+k,\,
\begin{aligned}
&S_0 + (Y-XB)^\top(Y-XB) \\
&\quad + (B-B_0)^\top V_0^{-1}(B-B_0)
\end{aligned}
\right).
\label{eq:sigma_post_given_B}
\end{equation}

\subsection{Conditional Gaussian draws for missing entries}
Fix $(B,\Sigma)$. For subject $i$, let $\mathcal{O}_i$ index observed entries and $\mathcal{M}_i$ index missing entries. Partition
\[
Y_i = \begin{pmatrix}Y_{i,\mathcal{O}_i}\\ Y_{i,\mathcal{M}_i}\end{pmatrix},\quad
\mu_i = X_iB = \begin{pmatrix}\mu_{i,\mathcal{O}_i}\\ \mu_{i,\mathcal{M}_i}\end{pmatrix},\quad
\Sigma = \begin{pmatrix}
\Sigma_{\mathcal{O}\mathcal{O}} & \Sigma_{\mathcal{O}\mathcal{M}}\\
\Sigma_{\mathcal{M}\mathcal{O}} & \Sigma_{\mathcal{M}\mathcal{M}}
\end{pmatrix}.
\]

\begin{lemma}[Gaussian conditional distribution]
\label{lem:cond_mvn}
Under \eqref{eq:model},
\begin{align}
Y_{i,\mathcal{M}_i}\mid Y_{i,\mathcal{O}_i},B,\Sigma,X_i
&\sim \mathcal{N}\Big(\mu_{i,\mathcal{M}_i} + \Sigma_{\mathcal{M}\mathcal{O}}\Sigma_{\mathcal{O}\mathcal{O}}^{-1}(Y_{i,\mathcal{O}_i}-\mu_{i,\mathcal{O}_i}),\nonumber\\
&\hspace{2.2cm}\Sigma_{\mathcal{M}\mathcal{M}} - \Sigma_{\mathcal{M}\mathcal{O}}\Sigma_{\mathcal{O}\mathcal{O}}^{-1}\Sigma_{\mathcal{O}\mathcal{M}}\Big).
\label{eq:cond_mvn}
\end{align}
\end{lemma}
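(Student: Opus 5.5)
The plan is the standard Gaussian conditioning argument, carried out by an explicit linear change of variables rather than by completing the square in the joint density.

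First, permute the coordinates of $Y_i$ so that the observed block $\mathcal{O}_i$ comes first. A permutation $P_i$ maps $\mathcal{N}_p(\mu_i,\Sigma)$ to $\mathcal{N}_p(P_i\mu_i,P_i\Sigma P_i^\top)$, and this yields exactly the partitioned mean and covariance displayed before the lemma. Because $\Sigma\succ 0$, its principal submatrix $\Sigma_{\mathcal{O}\mathcal{O}}$ is also positive definite, so the inverse $\Sigma_{\mathcal{O}\mathcal{O}}^{-1}$ appearing in \eqref{eq:cond_mvn} exists. Under \eqref{eq:model}, with $(B,\Sigma,X_i)$ held fixed, the vector $Y_i$ is simply a $p$-variate normal. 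The claim therefore reduces to the classical conditioning identity for a partitioned Gaussian vector.

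Next, define the residual
\[
Z = Y_{i,\mathcal{M}_i} - \Sigma_{\mathcal{M}\mathcal{O}}\Sigma_{\mathcal{O}\mathcal{O}}^{-1} Y_{i,\mathcal{O}_i}.
\]
The pair $(Z, Y_{i,\mathcal{O}_i})$ is a linear image of $Y_i$, so it is jointly Gaussian. Its cross-covariance is
\[
\mathrm{Cov}(Z,Y_{i,\mathcal{O}_i})=\Sigma_{\mathcal{M}\mathcal{O}}-\Sigma_{\mathcal{M}\mathcal{O}}\Sigma_{\mathcal{O}\mathcal{O}}^{-1}\Sigma_{\mathcal{O}\mathcal{O}}=0.
\]
For jointly Gaussian vectors, zero cross-covariance implies independence, so $Z$ is independent of $Y_{i,\mathcal{O}_i}$. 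We then compute the moments of $Z$ directly:
\begin{itemize}
\item its mean is $\mu_{i,\mathcal{M}_i}-\Sigma_{\mathcal{M}\mathcal{O}}\Sigma_{\mathcal{O}\mathcal{O}}^{-1}\mu_{i,\mathcal{O}_i}$;
\item its covariance, after expanding the quadratic form, is the Schur complement $\Sigma_{\mathcal{M}\mathcal{M}}-\Sigma_{\mathcal{M}\mathcal{O}}\Sigma_{\mathcal{O}\mathcal{O}}^{-1}\Sigma_{\mathcal{O}\mathcal{M}}$.
\end{itemize}

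Finally, write $Y_{i,\mathcal{M}_i}=Z+\Sigma_{\mathcal{M}\mathcal{O}}\Sigma_{\mathcal{O}\mathcal{O}}^{-1}Y_{i,\mathcal{O}_i}$. By independence, conditioning on $Y_{i,\mathcal{O}_i}$ leaves the law of $Z$ unchanged and only shifts it by a known constant. This gives exactly the mean and covariance in \eqref{eq:cond_mvn}.

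The only delicate point is the degenerate edge case. The Schur complement is positive definite because $\Sigma\succ0$, so the conditional law is a genuine nondegenerate normal. The conventions when $\mathcal{M}_i$ or $\mathcal{O}_i$ is empty are trivial and will be noted in passing. No earlier result from the paper is needed beyond the model \eqref{eq:model} itself.
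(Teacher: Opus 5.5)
Your argument is correct, but it takes a different route from the paper's proof.

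\textbf{The paper's route.} It works with densities. It writes $\Sigma^{-1}$ via the block-inverse formula in terms of the Schur complement $S$. It then substitutes this into the joint Gaussian density, keeps only the terms depending on $y_M$, and completes the square to read off the conditional mean and covariance.

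\textbf{Your route.} You use the residual $Z=Y_{i,\mathcal{M}_i}-\Sigma_{\mathcal{M}\mathcal{O}}\Sigma_{\mathcal{O}\mathcal{O}}^{-1}Y_{i,\mathcal{O}_i}$. You show it is uncorrelated with, and hence (by joint Gaussianity) independent of, $Y_{i,\mathcal{O}_i}$. You then use the fact that for independent $Z$ and $Y_{i,\mathcal{O}_i}$, the conditional law of $Z+g(Y_{i,\mathcal{O}_i})$ given $Y_{i,\mathcal{O}_i}=y$ is the law of $Z+g(y)$. That substitution step is the one measure-theoretic fact you rely on, and it is standard.

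\textbf{What each approach buys.}
\begin{itemize}
\item Your route avoids the block-inverse identity and all density algebra.
\item It needs only $\Sigma_{\mathcal{O}\mathcal{O}}\succ 0$, not $\Sigma\succ 0$. It therefore also covers singular $\Sigma$, where no joint density exists and the conditional law may be a degenerate Gaussian.
\item It gives a constructive representation $Y_{i,\mathcal{M}_i}=Z+\Sigma_{\mathcal{M}\mathcal{O}}\Sigma_{\mathcal{O}\mathcal{O}}^{-1}Y_{i,\mathcal{O}_i}$ that mirrors how the imputation step actually generates draws.
\item The paper's route requires a nondegenerate joint density and invertibility of $S$. In return it yields the conditional density kernel directly, in the same kernel-manipulation style the paper uses in its conjugate-posterior proof.
\end{itemize}
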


When $\Sigma_{\mathcal{O}\mathcal{O}}$ is ill conditioned, direct inversion in \eqref{eq:cond_mvn} is unstable. HIMCE therefore supports stabilized conditioning. For $\delta>0$, define $\Sigma_{\delta}=\Sigma+\delta\I_p$ and apply Lemma~\ref{lem:cond_mvn} with $\Sigma_{\delta}$ in place of $\Sigma$. In standardized blocks, a convenient scale choice is $\delta = \varepsilon\,\tr(\Sigma)/p$ with $\varepsilon\in[10^{-6},10^{-2}]$.

\subsection{Covariance-mode updating}
\subsubsection{Inverse-Wishart posterior mode}
\begin{lemma}[Mode of inverse-Wishart]
\label{lem:iw_mode}
If $\Sigma\sim\mathrm{IW}(\nu,S)$ with $\nu>p+1$, then the unique mode is
\begin{equation}
\Sigma_{\mathrm{mode}} = \frac{S}{\nu+p+1}.
\label{eq:iw_mode}
\end{equation}
\end{lemma}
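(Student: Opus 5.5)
The plan is to write down the inverse-Wishart density, maximize its logarithm over the open cone of symmetric positive definite matrices, and check that the stationary point is the unique global maximizer. With the parametrization used throughout the paper, $\mathrm{IW}(\nu,S)$ has density proportional to
\[
f(\Sigma)\propto |\Sigma|^{-(\nu+p+1)/2}\exp\!\Big(-\tfrac12\tr(S\Sigma^{-1})\Big),\qquad \Sigma\succ0,
\]
so the log-density is, up to an additive constant,
\[
\ell(\Sigma) = -\tfrac{\nu+p+1}{2}\log|\Sigma| - \tfrac12\tr(S\Sigma^{-1}).
\]
I would reparametrize by the precision $\Omega=\Sigma^{-1}$, which is a bijection of the positive definite cone onto itself. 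This gives
\[
g(\Omega)=\tfrac{c}{2}\log|\Omega|-\tfrac12\tr(S\Omega),\qquad c=\nu+p+1>0 .
\]
Maximizing $g$ over $\Omega$ is then equivalent to maximizing $\ell$ over $\Sigma$. (The mode is taken with respect to Lebesgue measure in the $\Sigma$ coordinates. Reparametrizing the density itself would introduce a Jacobian, so I maximize $\ell$ and only substitute variables, which is legitimate.)

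The key steps are as follows.
\begin{enumerate}
\item $g$ is strictly concave. The map $\Omega\mapsto\log|\Omega|$ is strictly concave on the positive definite cone, and $\tr(S\Omega)$ is linear.
\item Compute the gradient using $\partial\log|\Omega|/\partial\Omega=\Omega^{-1}$ and $\partial\tr(S\Omega)/\partial\Omega=S$. This gives $\nabla g=\tfrac12(c\,\Omega^{-1}-S)$, which vanishes exactly at $\Omega^\ast=cS^{-1}$. This point is positive definite because $S\succ0$.
\item Strict concavity makes this stationary point the unique global maximizer. Transforming back gives $\Sigma_{\mathrm{mode}}=S/(\nu+p+1)$, which is \eqref{eq:iw_mode}.
\end{enumerate}

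The main obstacle is making global existence and uniqueness rigorous on a non-compact open domain, rather than just exhibiting a critical point. Strict concavity already settles this: a stationary point of a strictly concave function on a convex open set is its unique global maximum.

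As a cross-check I would also give a coercivity argument in the eigenvalues. Write $S^{-1/2}\Sigma S^{-1/2}$ with eigenvalues $\lambda_j$. Then
\[
\ell=\text{const}-\sum_j\Big(\tfrac{c}{2}\log\lambda_j+\tfrac{1}{2\lambda_j}\Big).
\]
Each summand is uniquely minimized at $\lambda_j=1/c$ and tends to $+\infty$ as $\lambda_j\to0$ or $\lambda_j\to\infty$. This confirms the maximizer directly.

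Finally, I would note that the hypothesis $\nu>p+1$ is only needed for $\mathrm{IW}(\nu,S)$ to be a proper distribution (indeed $\nu>p-1$ suffices). The mode calculation itself uses only $c>0$ and $S\succ0$.
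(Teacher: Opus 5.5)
Your proof is correct. It differs from the paper's in how it certifies that the stationary point is the unique global mode.

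The paper works directly in $\Sigma$. It derives the first-order condition $(\nu+p+1)\Sigma^{-1}-\Sigma^{-1}S\Sigma^{-1}=0$. It then argues globality from three facts: the density is continuous, it vanishes on the boundary of the positive definite cone and at infinity, and the stationary point is unique. That extra step is needed in those coordinates because $\ell$ is not concave in $\Sigma$: for $p=1$, $\ell''(\sigma)>0$ whenever $\sigma>2s/c$. The paper also asserts the boundary behaviour without checking it.

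Your substitution $\Omega=\Sigma^{-1}$ is applied to the objective rather than the density, so no Jacobian enters, as you note. It makes the objective strictly concave. Existence, uniqueness and global optimality then all follow from the single stationary point $\Omega^\ast=cS^{-1}$, with no boundary analysis at all.

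Your eigenvalue reduction is essentially the quantitative version of the paper's boundary claim. In terms of the eigenvalues $\lambda_j$ of $S^{-1/2}\Sigma S^{-1/2}$, the density is proportional to $\prod_j\lambda_j^{-c/2}e^{-1/(2\lambda_j)}$. Each factor is bounded and tends to zero at both ends, which is exactly what the paper's assertion needs.

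Your closing observation is also right. The computation uses only $c>0$ and $S\succ0$, and the lemma's hypothesis $\nu>p+1$ is stronger than the $\nu>p-1$ required for properness.
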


\subsubsection{Mode update conditional on $B$ and a completed block}
Motivated by \eqref{eq:sigma_post_given_B} and Lemma~\ref{lem:iw_mode}, the high-dimensional reference version of HIMCE replaces inverse-Wishart sampling by a deterministic covariance update. Given a completed $Y^\ast$ and current $B$, define residuals $E^\ast=Y^\ast-XB$ and the regularized residual sum of squares matrix
\[
\mathrm{RSS}(B)= (E^\ast)^\top E^\ast + (B-B_0)^\top V_0^{-1}(B-B_0).
\]
The conditional posterior mode of $\Sigma$ given $(B,Y^\ast)$ is
\begin{equation}
\Sigma \leftarrow \frac{S_0 + \mathrm{RSS}(B)}{\nu_0+n+k+p+1}.
\label{eq:sigma_mode_update}
\end{equation}
This update can be viewed as an ECM-type covariance step embedded in a stochastic imputation algorithm: missing data are drawn in the I-step, mean parameters are updated stochastically, and the covariance is updated to a stabilized mode.

HIMCE preserves the Gaussian likelihood and the conditional imputation law \eqref{eq:cond_mvn}: for every missingness pattern, missing entries are still drawn from the Gaussian conditional distribution implied by the current mean and covariance. It also preserves posterior uncertainty in the mean parameters through stochastic coefficient draws, either by the exact matrix-normal update \eqref{eq:mn_draw} when feasible or by scalable ridge/local-ridge approximations. What is approximated in the high-dimensional branch is the posterior law of $\Sigma$: instead of drawing a full inverse-Wishart covariance matrix, the algorithm uses a covariance mode, optionally expanded by a one-dimensional scalar bridge. Thus HIMCE is a posterior-motivated stochastic approximation rather than an exact joint-model sampler in large blocks. The low-dimensional branch restores the exact conditional inverse-Wishart covariance draw \eqref{eq:sigma_post_given_B} when that draw is computationally inexpensive.

\subsubsection{Regularization for high-dimensional blocks}
When $p>n$, the empirical residual covariance $n^{-1}(E^\ast)^\top E^\ast$ is singular. Even with a proper $S_0$, it can be beneficial to apply additional regularization before using \eqref{eq:sigma_mode_update}. A shrinkage option replaces $\widehat\Sigma_E=(E^\ast)^\top E^\ast/n$ by
\begin{equation}
\widehat\Sigma_{\mathrm{shrink}} = (1-\gamma)\widehat\Sigma_E + \gamma\,\frac{\tr(\widehat\Sigma_E)}{p}\I_p,\qquad \gamma\in[0,1],
\label{eq:shrink}
\end{equation}
and a ridge option uses
\begin{equation}
\widehat\Sigma_{\mathrm{ridge}} = \widehat\Sigma_E + \tau \I_p,\qquad \tau>0.
\label{eq:ridge_cov}
\end{equation}
The Ledoit--Wolf shrinkage choice is a standard data-driven option \citep{LedoitWolf2004}.

\subsubsection{Empirical-Bayes covariance approximation used by HIMA}
The numerical experiments benchmark HIMA as in Lu et al.~\citep{lu2025new}, following the empirical-Bayes covariance estimator implemented there and related to empirical-Bayes normal covariance shrinkage \citep{Champion2003}. Let $W=Y^\ast-XB$ denote the current residual block, and let $S_W$ and $R_W=(r_{ij})$ denote its pairwise sample covariance and correlation matrices. HIMA constructs a structured target matrix $Z$ with diagonal entries $Z_{ii}=S_{W,ii}$ and off-diagonal entries
\begin{equation}
Z_{ij}=\bar\rho\sqrt{Z_{ii}Z_{jj}},\qquad i\neq j,
\label{eq:z_cov_hima}
\end{equation}
where the common corrected correlation is
\begin{equation}
\bar\rho = \frac{1}{p(p-1)}\sum_{i\neq j} r_{ij}\,\Fhyper\!\left(\frac12,\frac12;\frac{n-1}{2};1-r_{ij}^2\right).
\label{eq:rho_bar_hima}
\end{equation}
Define
\begin{align}
\alpha_{ij} &= r_{ij}\,\Fhyper\!\left(\frac12,\frac12;\frac{n-1}{2};1-r_{ij}^2\right),\label{eq:alpha_hima}\\
\beta_{ij} &= 1-\frac{(n-2)(1-r_{ij}^2)}{n-1}\,\Fhyper\!\left(1,1;\frac{n+1}{2};1-r_{ij}^2\right),
\label{eq:beta_hima}
\end{align}
and estimate the shrinkage intensity through
\begin{equation}
k^2 = \frac{\sum_{i\neq j}\{\beta_{ij}-2\alpha_{ij}\bar\rho+\bar\rho^2\}}{\sum_{i\neq j}(1-\bar\rho^2)^2},\qquad \lambda_{\mathrm{EB}} = k^{-2}-3.
\label{eq:lambda_hima}
\end{equation}
The empirical-Bayes posterior mean and mode are then
\begin{equation}
\widehat\Sigma^{\mathrm{EB}}_{\mathrm{mean}} = \frac{\lambda_{\mathrm{EB}} Z + S_W}{\lambda_{\mathrm{EB}}+n},\qquad
\widehat\Sigma^{\mathrm{EB}}_{\mathrm{mode}} = \frac{\lambda_{\mathrm{EB}} Z + S_W}{\lambda_{\mathrm{EB}}+n+2p+2}.
\label{eq:eb_mode_hima}
\end{equation}
HIMA uses $\widehat\Sigma^{\mathrm{EB}}_{\mathrm{mode}}$ as its deterministic covariance update. In the companion R implementation, the two Gauss hypergeometric terms are evaluated by truncated series with a fixed number of expansion terms, and a nearest symmetric positive-definite projection is applied if roundoff makes the empirical-Bayes estimate indefinite.

\begin{remark}[Theory versus implementation]
The fixed-dimension theorems in Section~\ref{sec:theory} are proved for the conjugate inverse-Wishart mode plug-in \eqref{eq:sigma_mode_update}. The empirical-Bayes HIMA covariance mode \eqref{eq:eb_mode_hima} is the high-dimensional implementation used in the experiments. It instantiates the same covariance-mode plug-in principle, but it is not itself the object of those fixed-dimension proofs.
\end{remark}

When the block dimension is small, the computational reason for avoiding inverse-Wishart sampling disappears. The companion implementation therefore uses the exact conditional covariance draw \eqref{eq:sigma_post_given_B} in low-dimensional blocks, with $B_0=0$ and $V_0^{-1}=\alpha I_k$, so the scale matrix is $S_0 + (Y-XB)^\top(Y-XB) + \alpha B^\top B$. This branch is used in the repeated NHANES benchmark, where $p=2$, and directly addresses the under-coverage observed when the empirical-Bayes covariance mode is reused unchanged.

\subsection{Coefficient updates}
\subsubsection{Exact matrix-normal draw}
Under complete data and fixed $\Sigma$, the conditional posterior \eqref{eq:B_post} implies the exact draw
\begin{equation}
B \leftarrow B_n + L_V Z L_\Sigma^\top,
\qquad Z\in\R^{k\times p}\ \text{with i.i.d. }\mathcal{N}(0,1),
\label{eq:mn_draw}
\end{equation}
where $L_VL_V^\top=V_n$ and $L_\Sigma L_\Sigma^\top=\Sigma$. This update preserves cross-column dependence through $\Sigma$.

\subsubsection{Scalable ridge and local-ridge draws}
For large $p$, explicitly forming $L_\Sigma$ and the full matrix-normal draw \eqref{eq:mn_draw} can be expensive. HIMCE therefore uses a scalable approximation based on column-wise ridge posteriors. For column $j$,
\begin{equation}
Y_{\cdot j} \mid X,b_j,\sigma_j^2 \sim \mathcal{N}_n(Xb_j,\sigma_j^2 \I_n),\qquad
b_j \mid \sigma_j^2 \sim \mathcal{N}_k\!\left(0,\frac{\sigma_j^2}{\alpha}\I_k\right),
\label{eq:ridge_col}
\end{equation}
with $\sigma_j^2$ taken from the current covariance iterate, for example $\sigma_j^2=\Sigma_{jj}$. Then
\begin{equation}
 b_j \mid Y^\ast_{\cdot j},X,\sigma_j^2 \sim \mathcal{N}_k\left(\widehat b_j,\,\sigma_j^2 (X^\top X + \alpha \I_k)^{-1}\right),
\qquad
\widehat b_j = (X^\top X + \alpha \I_k)^{-1}X^\top Y^\ast_{\cdot j}.
\label{eq:ridge_posterior}
\end{equation}
In the primary high-dimensional companion benchmark, this global column-wise draw is replaced by a screened stochastic local-ridge draw that uses the forced design variables $X$ and a small set of highly correlated block predictors. This reduces global shrinkage of the posterior mean while keeping the mean update sparse and stable.

\subsection{Algorithms}
\subsubsection{Reference Bayesian sampler}
Algorithm~\ref{alg:exact} records the exact data augmentation sampler for the Gaussian block model. It targets the joint posterior of $(Y_{\mathrm{mis}},B,\Sigma)$ given $(Y_{\mathrm{obs}},X)$.

\begin{algorithm}[t]
\caption{Reference MVN data augmentation sampler}
\label{alg:exact}
\begin{algorithmic}[1]
\Procedure{MVN-DA}{$Y_{\mathrm{obs}},X,T$}
\State Initialize $(Y_{\mathrm{mis}}^{(0)},B^{(0)},\Sigma^{(0)})$.
\For{$t=1,\ldots,T$}
  \State Draw $Y_{\mathrm{mis}}^{(t)}$ from \eqref{eq:cond_mvn} given $(B^{(t-1)},\Sigma^{(t-1)})$.
  \State Draw $\Sigma^{(t)}$ from \eqref{eq:sigma_post}, then draw $B^{(t)}$ from \eqref{eq:B_post}.
\EndFor
\State \Return posterior draws $\{Y_{\mathrm{mis}}^{(t)},B^{(t)},\Sigma^{(t)}\}$.
\EndProcedure
\end{algorithmic}
\end{algorithm}

\subsubsection{HIMA and HIMCE covariance-mode chains}
Algorithm~\ref{alg:covmode} summarizes the covariance-mode chains used in the benchmark. HIMA uses deterministic ridge mean updates and the empirical-Bayes covariance mode \eqref{eq:eb_mode_hima}. HIMCE keeps the same conditional Gaussian imputation step but uses stochastic mean updates. In high-dimensional blocks, HIMCE uses the covariance mode \eqref{eq:eb_mode_hima}, optionally with a scalar bridge; in low-dimensional blocks, it uses the exact conditional covariance draw \eqref{eq:sigma_post_given_B}. In the theoretical reference chain, the high-dimensional covariance step may be represented by the conjugate mode update \eqref{eq:sigma_mode_update}.

\begin{algorithm}[t]
\caption{Covariance-mode chains for HIMA and HIMCE}
\label{alg:covmode}
\begin{algorithmic}[1]
\Procedure{CovModeMI}{$Y_{\mathrm{obs}},X,M,T,\delta,\alpha,\mathrm{method}$}
\State Initialize $Y^{(0)}$, $B^{(0)}$, and $\Sigma^{(0)}$ by a ridge fit and stabilized residual covariance.
\For{$m=1,\ldots,M$}
  \For{$t=1,\ldots,T$}
    \State Draw or fill $Y_{\mathrm{mis}}^{(t)}$ from \eqref{eq:cond_mvn} using $\Sigma_\delta$.
    \State Update mean: deterministic ridge mean for HIMA; stochastic ridge/local-ridge draw for HIMCE.
    \State Update covariance: empirical-Bayes mode \eqref{eq:eb_mode_hima} in high-dimensional blocks, or exact conditional inverse-Wishart draw \eqref{eq:sigma_post_given_B} for low-dimensional HIMCE.
    \State Optionally apply a scalar bridge to the high-dimensional HIMCE covariance iterate.
  \EndFor
  \State Store completed dataset $Y_{(m)}$.
\EndFor
\State For HIMCE, optionally apply the observed-cell calibration map to the stored draws.
\State \Return completed datasets $\{Y_{(m)}\}_{m=1}^M$.
\EndProcedure
\end{algorithmic}
\end{algorithm}

\subsubsection{Observed-cell calibration and reporting convention}
The companion implementation contains a finite-sample calibration layer after the core HIMCE chain. This layer is not part of the fixed-dimension theory. Let $\bar Y^{\mathrm{HIMA}}$ and $\bar Y^{\mathrm{HIMCE}}$ be posterior mean matrices computed from stored imputations. On a held-out set of observed cells, we fit
\[
\widetilde\mu_{ij}=a_j + b_j\bigl(w_j\bar Y^{\mathrm{HIMCE}}_{ij} + (1-w_j)\bar Y^{\mathrm{HIMA}}_{ij}\bigr),
\]
with $w_j$ shrunk toward HIMCE. The stored HIMCE deviations are recentered around $\widetilde\mu$. An additional scalar scale correction is accepted only if it improves observed-cell diagnostics computed from the same randomized empirical predictive CDF used for PIT, and only when the coverage gain is not offset by worse PIT or marginal-shape behavior. The calibration uses only observed cells; withheld pseudo-missing truths are used only for final evaluation.

In the empirical sections, \emph{HIMCE} denotes the final completed datasets after this observed-cell map. Pre-calibration stored draws are an internal stage of HIMCE rather than a separate method. HIMA and MICE are not passed through this calibration layer. The runtime column for HIMCE reports the shared empirical-Bayes fit plus the active chain; the calibration pass is exported separately in the companion output.

\section{Multiple-imputation inference and diagnostics}
\label{sec:diagnostics}
\subsection{Rubin pooling}
Let $Q$ be a scalar estimand, $\widehat Q_m$ its estimate from imputed dataset $m$, and $\widehat U_m$ the corresponding complete-data variance estimate. Rubin's combining rules are
\begin{align}
\bar Q &= \frac{1}{M}\sum_{m=1}^M \widehat Q_m,\qquad \bar U = \frac{1}{M}\sum_{m=1}^M \widehat U_m,\\
B_M &= \frac{1}{M-1}\sum_{m=1}^M (\widehat Q_m - \bar Q)^2,\qquad
T_M = \bar U + \left(1+\frac{1}{M}\right)B_M,
\end{align}
with degrees-of-freedom approximation
\[
\nu_{\mathrm{MI}} = (M-1)\left(1+\frac{1}{r}\right)^2,\qquad r = \frac{(1+1/M)B_M}{\bar U}.
\]
See \citep{Rubin2004}.

\subsection{Pseudo-missing calibration diagnostics}
Pointwise error summaries are informative for prediction but do not directly assess whether an imputation procedure produces an appropriate amount of variability. For continuous blocks it is often practical to perform pseudo-missing experiments starting from a fully observed subset.

Interquartile coverage is a direct check of dispersion. Starting from a complete block, mask entries completely at random at a fixed rate, rerun the imputer with many random seeds, and for each withheld cell compute the empirical first and third quartiles of the predictive distribution. The IQR coverage is the proportion of withheld truths that fall inside their central 50\% predictive intervals. Under rough calibration this value is close to $0.5$.

Support violations are informative when variables are bounded. For variables with natural bounds, record the fraction of imputed values that fall outside an admissible interval, such as the observed range in a benchmark subset. Large violations suggest that a Gaussian working model is too crude and motivate donor-based robustification such as predictive mean matching.

\subsection{Randomized rank-cell PIT and PIT-consistent empirical coverage}
\label{subsec:pit}
Coverage diagnostics are cell-wise, while many downstream analyses are sensitive to the aggregate distribution of imputed values. A complementary check compares the empirical distribution of withheld true values to the pooled empirical distribution of imputation draws.

Let $d_{\ell 1},\ldots,d_{\ell M}$ be the $M$ stored draws for withheld cell $\ell$, and let $t_\ell$ be the withheld truth. Define
\[
r^-_\ell = \#\{m: d_{\ell m}<t_\ell\},\qquad r^=_\ell = \#\{m: d_{\ell m}=t_\ell\}.
\]
The randomized rank-cell PIT is
\begin{equation}
\mathrm{PIT}_\ell = \frac{r^-_\ell + U_\ell(r^=_\ell+1)}{M+1},\qquad U_\ell\sim\mathrm{Unif}(0,1),
\label{eq:pit_rank_cell}
\end{equation}
independently across $\ell$. Relative to a left-edge empirical-rank PIT, \eqref{eq:pit_rank_cell} fills the whole rank cell and prevents finite-draw aliasing spikes from dominating the histogram.

For a well-calibrated posterior predictive distribution, the probability integral transform (PIT) values should be approximately uniform. A hump-shaped PIT distribution, with excess mass near the centre, indicates overdispersion: the predictive draws are too wide relative to the observed data. A U-shaped PIT distribution, with excess mass near 0 and 1, indicates underdispersion: the predictive draws are too narrow. Skewness in the PIT distribution indicates bias; excess mass near 0 suggests that predictions are systematically too large, whereas excess mass near 1 suggests that predictions are systematically too small.

The companion benchmark computes interval coverages from the same empirical predictive CDF used in \eqref{eq:pit_rank_cell}. For level $1-\alpha$, define
\begin{equation}
\widehat{\mathrm{cov}}_{1-\alpha} = \frac{1}{L}\sum_{\ell=1}^L \mathbf{1}\left\{\mathrm{PIT}_\ell\in\left[\frac{\alpha}{2},1-\frac{\alpha}{2}\right]\right\},
\label{eq:pit_consistent_cov}
\end{equation}
where $L$ is the number of withheld cells. Equation~\eqref{eq:pit_consistent_cov} yields PIT-consistent empirical central-interval coverage, so PIT and coverage cannot disagree merely because one is based on randomized empirical ranks and the other on interpolated type-7 quantiles from a small ensemble.

A convenient summary of central PIT mass is
\[
\widehat p_{0.4,0.6}=\frac{1}{L}\sum_{\ell=1}^L\mathbf{1}\{\mathrm{PIT}_\ell\in[0.4,0.6]\},
\]
whose uniform reference value is $0.20$.

\subsection{Marginal distribution overlays}
A second display compares the marginal distribution of withheld truths with the marginal distributions of pooled predictive draws and posterior means. In the updated companion benchmark, each panel overlays three densities: withheld truth, pooled draws, and posterior means. This allows center recovery and predictive spread to be assessed separately in a single figure. We also summarize marginal discrepancies by gaps in mean, standard deviation, interquartile range, and a quantile-quantile discrepancy statistic.

The distinction between posterior means and pooled draws is important. In our benchmark the block is standardized before masking, and HIMCE uses ridge-regularized mean updates. Both choices shrink extreme cell-wise posterior means toward the block center, so posterior means are naturally more concentrated than pooled predictive draws. A narrow posterior-mean density is therefore not, by itself, evidence that the imputation distribution is too narrow. Under-dispersion is diagnosed when concentrated posterior means are accompanied by narrow pooled-draw densities, U-shaped or edge-heavy randomized PIT, and low PIT-consistent coverage.

\section{Theoretical results}
\label{sec:theory}
This section records the key theoretical properties of the reference sampler and the covariance-mode approximation. Proofs are collected in \ref{app:proofs}.

\subsection{Correctness of the reference data augmentation kernel}
\begin{theorem}[Exactness of Algorithm~\ref{alg:exact}]
\label{thm:da_correct}
Assume \eqref{eq:model}, MAR \eqref{eq:mar}, and a proper prior \eqref{eq:priorSigma}--\eqref{eq:priorB}. The Markov kernel that alternates (i) sampling $Y_{\mathrm{mis}}$ from \eqref{eq:cond_mvn} and (ii) sampling $(B,\Sigma)$ from the full conditional posterior \eqref{eq:sigma_post} and \eqref{eq:B_post} leaves the joint posterior $p(Y_{\mathrm{mis}},B,\Sigma\mid Y_{\mathrm{obs}},X)$ invariant.
\end{theorem}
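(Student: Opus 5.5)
The plan is to show that the kernel is a two-block Gibbs sampler on the joint target $\pi(Y_{\mathrm{mis}},B,\Sigma)\propto p(Y_{\mathrm{obs}},Y_{\mathrm{mis}}\mid X,B,\Sigma)\,p(B,\Sigma)$. Each block update samples exactly from the corresponding full conditional of $\pi$, so each update leaves $\pi$ invariant. The composition of two $\pi$-invariant kernels is then $\pi$-invariant.

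The first step is to identify the target and dispose of the missingness mechanism. Under MAR \eqref{eq:mar}, and with the mechanism parameters distinct from $(B,\Sigma)$ (implicit in the ignorability setup), the full joint density of $(Y,R,B,\Sigma)$ given $X$ factorizes as $\mathbb{P}(R\mid Y_{\mathrm{obs}},X)\,p(Y\mid X,B,\Sigma)\,p(B,\Sigma)$. The first factor does not involve $(Y_{\mathrm{mis}},B,\Sigma)$. Conditioning on $(Y_{\mathrm{obs}},R,X)$ therefore gives the stated $\pi$, which is a proper density because the prior \eqref{eq:priorSigma}--\eqref{eq:priorB} is proper and the likelihood is bounded in $(B,\Sigma)$ for fixed data.

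The second step verifies the two conditionals. Given $(B,\Sigma)$, the rows of $Y$ are independent by \eqref{eq:model}, so $\pi(Y_{\mathrm{mis}}\mid B,\Sigma,Y_{\mathrm{obs}})$ is the product over $i$ of the laws in Lemma~\ref{lem:cond_mvn}; this is exactly step (i). Given the completed $Y=(Y_{\mathrm{obs}},Y_{\mathrm{mis}})$, the conditional of $(B,\Sigma)$ under $\pi$ is the complete-data posterior. By Theorem~\ref{thm:conjugate_complete} it is the marginal $\mathrm{IW}(\nu_n,S_n)$ for $\Sigma$ followed by $\mathrm{MN}(B_n,V_n,\Sigma)$ for $B$, with $S_n,B_n$ evaluated at the completed block; this is exactly step (ii). One point needs care: step (ii) must draw $\Sigma$ from its marginal given $Y$, not conditional on the old $B$. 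Otherwise it would be a three-block Gibbs sampler, which is still invariant but is a different kernel.

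The third step is the standard invariance computation. For the kernel $K_1$ that resamples $Y_{\mathrm{mis}}$ from $\pi(\cdot\mid\theta)$ with $\theta=(B,\Sigma)$,
\[
\int \pi(y',\theta)\,\pi(y\mid\theta)\,dy' = \pi(\theta)\,\pi(y\mid\theta) = \pi(y,\theta).
\]
The symmetric computation holds for $K_2$, which resamples $\theta$ from $\pi(\cdot\mid y)$. Hence $\pi K_1K_2=\pi K_2=\pi$.

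I expect the main obstacle to be the measure-theoretic bookkeeping rather than any real difficulty. The MAR factorization must be justified carefully, including the parameter-distinctness assumption. The missingness patterns must be treated as fixed once we condition on $R$. The conditional laws must be well defined for every $\Sigma\succ0$, which holds because $\Sigma_{\mathcal{O}\mathcal{O}}$ is a principal submatrix of a positive definite matrix and is therefore invertible. Finally, rows with $\mathcal{O}_i=\emptyset$ or $\mathcal{M}_i=\emptyset$ are trivial edge cases of Lemma~\ref{lem:cond_mvn}.
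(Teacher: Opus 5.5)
Your proposal is correct and follows the paper's proof, which likewise establishes invariance through the repeated-conditioning identity for the sweep $Y_{\mathrm{mis}}\to\Sigma\to B$; your observation that $\Sigma$ is drawn from its $B$-marginal, so that step (ii) is a single exact draw of $(B,\Sigma)$ from $\pi(\cdot\mid Y)$, describes the kernel more accurately than the paper's ``three blocks'' remark. The only thing to repair is your justification of properness, because the likelihood is not bounded in $(B,\Sigma)$ in general: with complete data and $n<p+k$, for instance, you can choose $B$ and a direction $v$ with $(Y-XB)v=0$ and let $\Sigma$ collapse along $v$, so argue instead that $p(Y_{\mathrm{obs}}\mid X)<\infty$, e.g.\ by integrating out $B$ given $\Sigma$, bounding the resulting Gaussian density of $Y_{\mathrm{obs}}$ by a constant times $\lambda_{\max}(\Sigma^{-1})^{d/2}$ (with $d$ the number of observed cells), and noting that this has finite expectation under the Wishart law of $\Sigma^{-1}$.
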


\subsection{Posterior consistency under complete data}
\begin{definition}[Frobenius neighborhoods]
For $\varepsilon>0$, define
\[
\mathcal{N}_\varepsilon = \{(B,\Sigma): \|B-B_\star\|_F + \|\Sigma-\Sigma_\star\|_F < \varepsilon\}.
\]
\end{definition}

\begin{theorem}[Posterior consistency under complete data]
\label{thm:consistency_complete}
Assume that $(Y_i,X_i)$ are i.i.d. with $Y_i\mid X_i\sim \mathcal{N}_p(X_iB_\star,\Sigma_\star)$, that $p$ and $k$ are fixed, and that $n^{-1}X^\top X\to Q_X$ with $Q_X\succ0$. Under a proper matrix-normal inverse-Wishart prior with positive density in a neighborhood of $(B_\star,\Sigma_\star)$,
\[
\Pi\big((B,\Sigma)\notin \mathcal{N}_\varepsilon \mid Y,X\big)\to 0
\]
in $P_{B_\star,\Sigma_\star}$-probability for every $\varepsilon>0$.
\end{theorem}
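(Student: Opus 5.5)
The posterior is available in closed form by Theorem~\ref{thm:conjugate_complete}, so we can verify consistency directly instead of invoking a general Schwartz-type argument. The plan is to show that the posterior mean of $(B,\Sigma)$ converges in probability to $(B_\star,\Sigma_\star)$ and that the posterior second moments around that mean vanish. Chebyshev's inequality, applied conditionally on $(Y,X)$, then gives $\Pi((B,\Sigma)\notin\mathcal{N}_\varepsilon\mid Y,X)\to0$ in probability.

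\textbf{Step 1: the covariance.} By \eqref{eq:sigma_post}, $\Sigma\mid Y,X\sim\mathrm{IW}(\nu_n,S_n)$ with $\nu_n=\nu_0+n$. Its posterior mean is $S_n/(\nu_n-p-1)$, and each entry has posterior variance of order $\|S_n\|^2/\nu_n^3$.

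Next we show $S_n/n\to\Sigma_\star$ in probability, using the decomposition
\[
Y-XB_n=(Y-X\widehat B)+X(\widehat B-B_n),\qquad \widehat B=(X^\top X)^{-1}X^\top Y .
\]
\begin{itemize}
\item The residual cross-product $(Y-X\widehat B)^\top(Y-X\widehat B)/n$ converges to $\Sigma_\star$ by the law of large numbers. Here $E=Y-XB_\star$ has i.i.d.\ $\mathcal N(0,\Sigma_\star)$ rows independent of $X$, so $E^\top E/n\to\Sigma_\star$, and the projection term $E^\top X(X^\top X)^{-1}X^\top E/n$ is $O_P(1/n)$ since $X^\top E=O_P(\sqrt n)$.
\item The shrinkage gap satisfies $\widehat B-B_n=O_P(1/n)$, because $V_n^{-1}=X^\top X+V_0^{-1}$ with $n^{-1}X^\top X\to Q_X\succ0$. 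Hence $\|X(\widehat B-B_n)\|_F^2=O_P(1/n)$.
\item The prior terms $S_0$ and $(B_n-B_0)^\top V_0^{-1}(B_n-B_0)$ are $O_P(1)$.
\end{itemize}
Together these give $S_n/n\to\Sigma_\star$, so the posterior mean of $\Sigma$ tends to $\Sigma_\star$ and its posterior variance is $O_P(1/n)$.

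\textbf{Step 2: the coefficients.} From \eqref{eq:B_post}, $E[B\mid Y,X]=B_n$. We have $B_n\to B_\star$ in probability, since $B_n-\widehat B=O_P(1/n)$ and $\widehat B-B_\star=(X^\top X)^{-1}X^\top E=O_P(n^{-1/2})$.

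For the spread, the conditional covariance of $\mathrm{vec}(B)$ is $\Sigma\otimes V_n$. Integrating over $\Sigma$ gives
\[
E\bigl[\|B-B_n\|_F^2\mid Y,X\bigr]=\tr(V_n)\,\tr\bigl(E[\Sigma\mid Y,X]\bigr)=O_P(1/n),
\]
because $\tr(V_n)=O(1/n)$ by the condition $n^{-1}X^\top X\to Q_X\succ0$.

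\textbf{Step 3: combining.} Fix $\varepsilon>0$. On the event $\|B_n-B_\star\|_F+\|E[\Sigma\mid Y,X]-\Sigma_\star\|_F<\varepsilon/2$, whose probability tends to one, a conditional Markov or Chebyshev bound gives
\[
\Pi(\mathcal N_\varepsilon^c\mid Y,X)\le \frac{16}{\varepsilon^2}\Bigl(E\bigl[\|B-B_n\|_F^2\mid Y,X\bigr]+E\bigl[\|\Sigma-E\Sigma\|_F^2\mid Y,X\bigr]\Bigr)=O_P(1/n).
\]
This proves the claim.

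The main obstacle is Step 1's second-moment control of the inverse-Wishart. It requires the explicit variance formula, with the $(\nu-p)(\nu-p-1)^2(\nu-p-3)$ denominators, and finiteness of the second moments, which needs $\nu_n>p+3$ and holds for large $n$. The positivity of the prior density near $(B_\star,\Sigma_\star)$ is not needed under this conjugate route; it would matter only in a Schwartz-type alternative argument.
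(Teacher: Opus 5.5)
Your proof is correct and has the same skeleton as the paper's. Both use the closed-form posterior, show $B_n\to B_\star$ and $S_n/n\to\Sigma_\star$, and control the coefficient block through $\E(\|B-B_n\|_F^2\mid Y,X)=\tr(V_n)\tr\{\E(\Sigma\mid Y,X)\}=O_P(n^{-1})$ followed by Markov's inequality.

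It differs from the paper in two places.

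\begin{enumerate}
\item For $S_n/n$, the paper expands $R_n=E+X(B_\star-B_n)$ directly. You pass through the OLS fit $\widehat B$ instead. This makes the residual split orthogonal, since the cross term vanishes by the normal equations, at the cost of the extra comparison $\widehat B-B_n=O_P(n^{-1})$.
\item The more substantive difference is the covariance block, where the paper never uses inverse-Wishart moments. Lemma~\ref{lem:app_iw_conc} writes $\Sigma\mid Y,X\stackrel{d}{=}S_n^{1/2}W^{-1}S_n^{1/2}$ with $W\sim W_p(\nu_n,\I_p)$ and uses $\nu_n^{-1}W\to\I_p$. Concentration then follows from $S_n/\nu_n\to\Sigma_\star$ by continuity alone, with no moment condition. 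You use the explicit inverse-Wishart variance and Chebyshev's inequality instead.
\end{enumerate}

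Your route needs $\nu_n>p+3$ and the closed-form variance formula. In return it treats $B$ and $\Sigma$ uniformly through second moments and gives the quantitative rate $\Pi(\mathcal N_\varepsilon^c\mid Y,X)=O_P(n^{-1})$, which the paper's representation argument does not provide.
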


\subsection{Consistency of the covariance posterior mode}
\begin{theorem}[Posterior mode consistency for $\Sigma$]
\label{thm:mode_consistency}
Under the assumptions of Theorem~\ref{thm:consistency_complete}, let $\widehat\Sigma_{\mathrm{mode}}$ be the mode of the inverse-Wishart posterior \eqref{eq:sigma_post}. Then $\|\widehat\Sigma_{\mathrm{mode}}-\Sigma_\star\|_F\to 0$ in $P_{B_\star,\Sigma_\star}$-probability.
\end{theorem}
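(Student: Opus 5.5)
By Lemma~\ref{lem:iw_mode} and Theorem~\ref{thm:conjugate_complete}, the posterior mode is explicit:
\[
\widehat\Sigma_{\mathrm{mode}}=\frac{S_n}{\nu_0+n+p+1}.
\]
So the plan is to show $S_n/n\to\Sigma_\star$ in probability. Since $(\nu_0+n+p+1)/n\to1$ with $p$ fixed, Slutsky's lemma then gives the claim. The proof is therefore a law-of-large-numbers argument for the three pieces of $S_n$ in \eqref{eq:Sn}, rather than an appeal to the posterior concentration of Theorem~\ref{thm:consistency_complete}. That theorem could be used as an alternative route, but a mode is not a continuous functional of the posterior, so the direct route is cleaner.

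The first step handles $B_n$. Write $Y=XB_\star+E$, where $E$ has i.i.d.\ $\mathcal N_p(0,\Sigma_\star)$ rows independent of $X$. Substituting into \eqref{eq:BnVn} gives
\[
B_n-B_\star=V_n\bigl(X^\top E+V_0^{-1}(B_0-B_\star)\bigr).
\]
Because $n^{-1}X^\top X\to Q_X\succ0$, we have $nV_n=(n^{-1}X^\top X+n^{-1}V_0^{-1})^{-1}\to Q_X^{-1}$. Conditionally on $X$, the term $n^{-1}X^\top E$ has mean zero and covariance $n^{-2}(X^\top X)\otimes\Sigma_\star=O(n^{-1})$, so it is $O_P(n^{-1/2})$. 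Hence $B_n-B_\star=O_P(n^{-1/2})$.

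The second step decomposes the residual term. With $\Delta_n=B_n-B_\star$,
\[
\frac1n(Y-XB_n)^\top(Y-XB_n)=\frac1nE^\top E-\frac2n\,\mathrm{sym}\!\left(E^\top X\Delta_n\right)+\Delta_n^\top\frac{X^\top X}{n}\Delta_n .
\]
The first term converges to $\Sigma_\star$ by the weak law of large numbers applied to the i.i.d.\ outer products $E_iE_i^\top$, which have finite mean. The cross term is $O_P(n^{-1/2})\cdot O_P(n^{-1/2})$ because $n^{-1}E^\top X=O_P(n^{-1/2})$. The last term is $O_P(1)\cdot O_P(n^{-1})$. Both therefore vanish.

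The third step handles the prior pieces. $S_0/n\to0$ deterministically. The penalty term satisfies $n^{-1}(B_n-B_0)^\top V_0^{-1}(B_n-B_0)=O_P(n^{-1})$, since $B_n\to B_\star$ in probability and is therefore bounded in probability. Combining the three steps gives $\|S_n/n-\Sigma_\star\|_F\to0$ in probability, which completes the proof.

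The only mildly delicate point is the first step. It needs the $O_P(n^{-1/2})$ rate, or at least consistency of $B_n$, to hold uniformly over the random design. This is where I would condition on $X$ and use the Chebyshev bound
\[
\mathbb{E}\bigl[\|n^{-1}X^\top E\|_F^2\mid X\bigr]=n^{-2}\tr(X^\top X)\,\tr(\Sigma_\star)=O(n^{-1}),
\]
which is valid by the assumed convergence of $n^{-1}X^\top X$. The rest is routine Slutsky bookkeeping.
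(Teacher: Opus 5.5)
Your proof is correct and follows the paper's own argument. The paper also writes $\widehat\Sigma_{\mathrm{mode}}=S_n/(\nu_n+p+1)$ via Lemma~\ref{lem:iw_mode} and reuses the fact $S_n/n\to\Sigma_\star$ from its proof of Theorem~\ref{thm:consistency_complete}, which rests on the same decomposition of $B_n-B_\star$, residual expansion, and law of large numbers that you carry out (with your cross-term and conditional Chebyshev bounds for $n^{-1}X^\top E$ written out more explicitly).
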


\subsection{Asymptotic validity of covariance-mode updating for prediction}
\begin{theorem}[Mode plug-in is asymptotically equivalent for prediction]
\label{thm:mode_plugin}
Under the assumptions of Theorem~\ref{thm:consistency_complete}, let $\Pi_n$ denote the exact posterior for $(B,\Sigma)$ under complete data. Let $\widehat\Sigma_{\mathrm{mode}}$ be the inverse-Wishart posterior mode. Consider the exact posterior predictive distribution
\[
\mathsf{P}_n(\cdot\mid Y,X)=\int p(\cdot\mid B,\Sigma,X)\,\Pi_n(dB,d\Sigma)
\]
and the plug-in predictive distribution
\[
\mathsf{P}_n^{\mathrm{mode}}(\cdot\mid Y,X)=\int p(\cdot\mid B,\widehat\Sigma_{\mathrm{mode}},X)\,\Pi_n(dB\mid \widehat\Sigma_{\mathrm{mode}}),
\]
where \(\Pi_n(dB\mid \widehat\Sigma_{\mathrm{mode}})\) denotes the matrix-normal law \eqref{eq:B_post} evaluated at \(\Sigma=\widehat\Sigma_{\mathrm{mode}}\). Then
\[
\TV\big(\mathsf{P}_n(\cdot\mid Y,X),\mathsf{P}_n^{\mathrm{mode}}(\cdot\mid Y,X)\big)\to 0
\]
in $P_{B_\star,\Sigma_\star}$-probability.
\end{theorem}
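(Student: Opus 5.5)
The plan is to reduce the total-variation comparison of the two predictive mixtures to a comparison of the mixing laws, and then use the concentration established in Theorems~\ref{thm:consistency_complete} and~\ref{thm:mode_consistency}. The prediction is for a new row at a fixed covariate value $x$ (or a fixed finite collection of rows). Write $\Pi_n(dB,d\Sigma)=\Pi_n(d\Sigma)\,\Pi_n(dB\mid\Sigma)$ via Theorem~\ref{thm:conjugate_complete}. Then $\mathsf{P}_n$ is the mixture of $G_\Sigma(\cdot):=\int p(\cdot\mid B,\Sigma,x)\,\Pi_n(dB\mid\Sigma)$ over $\Sigma\sim\mathrm{IW}(\nu_n,S_n)$, while $\mathsf{P}_n^{\mathrm{mode}}=G_{\widehat\Sigma_{\mathrm{mode}}}$. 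By convexity of total variation,
\[
\TV(\mathsf{P}_n,\mathsf{P}_n^{\mathrm{mode}})\le\int \TV\big(G_\Sigma,G_{\widehat\Sigma_{\mathrm{mode}}}\big)\,\Pi_n(d\Sigma).
\]
Under the matrix-normal law \eqref{eq:B_post}, $xB\sim\mathcal{N}_p(xB_n,(xV_nx^\top)\Sigma)$. Hence $G_\Sigma$ is exactly Gaussian, $\mathcal{N}_p\big(xB_n,(1+xV_nx^\top)\Sigma\big)$. The two Gaussians being compared share the same mean $xB_n$, and their covariances differ only through the common scalar factor $c_n=1+xV_nx^\top$.

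The second step bounds the integrand. The key is to choose the right quantity to control: total variation between $\mathcal{N}(m,c\Sigma_1)$ and $\mathcal{N}(m,c\Sigma_2)$ is invariant under the scaling $c$, so it equals $\TV(\mathcal{N}(0,\Sigma_1),\mathcal{N}(0,\Sigma_2))$. This is bounded by $\min\{1,\sqrt{\KL/2}\}$ through Pinsker, with
\[
\KL=\tfrac12\big[\tr(\Sigma_2^{-1}\Sigma_1)-p-\log\det(\Sigma_2^{-1}\Sigma_1)\big].
\]
This is a continuous function $h(\Sigma,\widehat\Sigma_{\mathrm{mode}})$ that vanishes on the diagonal, uniformly for matrices in a compact neighbourhood of $\Sigma_\star\succ0$. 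So for each $\eta>0$ there is $\varepsilon>0$ such that $\TV(G_\Sigma,G_{\widehat\Sigma_{\mathrm{mode}}})<\eta$ whenever both $\|\Sigma-\Sigma_\star\|_F<\varepsilon$ and $\|\widehat\Sigma_{\mathrm{mode}}-\Sigma_\star\|_F<\varepsilon$. Here $\varepsilon<\lambda_{\min}(\Sigma_\star)/2$ keeps both matrices uniformly positive definite.

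The third step splits the integral over $\{\|\Sigma-\Sigma_\star\|_F<\varepsilon\}$ and its complement, giving
\[
\TV(\mathsf{P}_n,\mathsf{P}_n^{\mathrm{mode}})\le \eta+\Pi_n\big(\|\Sigma-\Sigma_\star\|_F\ge\varepsilon\mid Y,X\big)+\mathbf 1\{\|\widehat\Sigma_{\mathrm{mode}}-\Sigma_\star\|_F\ge\varepsilon\}.
\]
The last term uses the fact that TV is bounded by $1$. The middle term tends to $0$ in probability by Theorem~\ref{thm:consistency_complete}, since the $\Sigma$-marginal event is contained in $\mathcal{N}_\varepsilon^c$. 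The indicator tends to $0$ in probability by Theorem~\ref{thm:mode_consistency}. Since $\eta$ is arbitrary, the result follows.

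I expect the main obstacle to be the uniform-continuity step. One has to check that the mixture-over-$B$ structure does not break the comparison, and that $c_n$ does not need to be controlled. The exact Gaussian form of $G_\Sigma$, together with the scale invariance of TV, handles both points, so the argument needs no rate for $V_n$; it only uses $V_n\to0$ implicitly through Theorem~\ref{thm:consistency_complete}. If the statement is instead read for the full matrix of future rows, the same argument applies with $\Sigma$ replaced by $\Sigma\otimes(\I+X_{\mathrm{new}}V_nX_{\mathrm{new}}^\top)$, whose KL again depends only on $\Sigma_2^{-1}\Sigma_1$ times a fixed number of rows.
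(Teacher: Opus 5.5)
Your proof is correct, but it takes a different route from the paper.

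The paper never compares $\mathsf{P}_n$ and $\mathsf{P}_n^{\mathrm{mode}}$ directly. It triangulates through the true sampling law $P_\star=\mathcal{N}_p(x_i^\top B_\star,\Sigma_\star)$ and shows two things separately:
\begin{itemize}
\item $\TV(\mathsf{P}_{n,i},P_\star)\to0$, using convexity of TV over the joint posterior of $(B,\Sigma)$, posterior consistency, and the joint $(\mu,\Sigma)$ continuity of Lemma~\ref{lem:app_gaussian_tv};
\item $\TV(\mathsf{P}^{\mathrm{mode}}_{n,i},P_\star)\to0$, using mode consistency plus a second-moment concentration argument for $B$ under the plug-in matrix-normal law.
\end{itemize}
It then applies the triangle inequality.

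You instead integrate $B$ out exactly given $\Sigma$. You observe that both predictives are built from the same Gaussian family $G_\Sigma=\mathcal{N}_p(xB_n,c_n\Sigma)$. Affine invariance of TV then cancels the shared mean $xB_n$ and the inflation factor $c_n$, so everything reduces to $\TV(\mathcal{N}(0,\Sigma),\mathcal{N}(0,\widehat\Sigma_{\mathrm{mode}}))$ integrated against the $\Sigma$-marginal posterior.

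What your route buys:
\begin{itemize}
\item It needs only $\Sigma$-marginal concentration and mode consistency, with no concentration of $B$ and no continuity in the mean.
\item It makes explicit that the whole discrepancy is a multivariate-$t$ versus Gaussian comparison, governed by the spread of $\mathrm{IW}(\nu_n,S_n)$ around its mode. That is also the natural starting point for a rate.
\item It handles a finite collection of future rows jointly and exactly through the $\Sigma\otimes(\I+X_{\mathrm{new}}V_nX_{\mathrm{new}}^\top)$ structure, which the paper only asserts is ``handled identically.''
\end{itemize}

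What the paper's route buys: it is cruder but more robust, because it does not rely on the two predictives sharing the same conditional law for $B$. It would go through unchanged if the plug-in used any other $B$-law concentrating at $B_\star$. It also yields the stronger by-product that both predictive laws are consistent in TV for $P_\star$.
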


\subsection{High-dimensional stability under shrinkage}
\begin{proposition}[Eigenvalue bounds for shrinkage covariance]
\label{prop:eig_bounds}
Let $\widehat\Sigma_E$ be a symmetric positive semidefinite matrix and let $\widehat\Sigma_{\mathrm{shrink}}$ be defined by \eqref{eq:shrink} with $\gamma\in(0,1]$. Then
\[
\lambda_{\min}(\widehat\Sigma_{\mathrm{shrink}})\ge \gamma\,\frac{\tr(\widehat\Sigma_E)}{p},
\qquad
\lambda_{\max}(\widehat\Sigma_{\mathrm{shrink}})\le (1-\gamma)\lambda_{\max}(\widehat\Sigma_E)+\gamma\,\frac{\tr(\widehat\Sigma_E)}{p}.
\]
If $\tr(\widehat\Sigma_E)>0$, then $\widehat\Sigma_{\mathrm{shrink}}$ is strictly positive definite and its condition number is controlled by $\gamma$ and the spectrum of $\widehat\Sigma_E$.
\end{proposition}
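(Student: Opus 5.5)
The plan is to write $\widehat\Sigma_{\mathrm{shrink}}$ as a convex combination of two symmetric matrices and then bound its extreme eigenvalues through the variational (Rayleigh quotient) characterization. Set $c=\tr(\widehat\Sigma_E)/p\ge 0$. Then $\widehat\Sigma_{\mathrm{shrink}}=(1-\gamma)\widehat\Sigma_E+\gamma c\,\I_p$ is symmetric. For any unit vector $v\in\R^p$,
\[
v^\top\widehat\Sigma_{\mathrm{shrink}}v=(1-\gamma)\,v^\top\widehat\Sigma_E v+\gamma c.
\]
The term $v^\top\widehat\Sigma_E v$ lies in $[\lambda_{\min}(\widehat\Sigma_E),\lambda_{\max}(\widehat\Sigma_E)]$ by Rayleigh--Ritz, and $\lambda_{\min}(\widehat\Sigma_E)\ge0$ because $\widehat\Sigma_E$ is positive semidefinite.

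For the lower bound I drop the nonnegative term $(1-\gamma)v^\top\widehat\Sigma_E v$, using $1-\gamma\ge0$. This gives $v^\top\widehat\Sigma_{\mathrm{shrink}}v\ge\gamma c$ for every unit $v$, and taking the infimum yields $\lambda_{\min}(\widehat\Sigma_{\mathrm{shrink}})\ge\gamma c$. For the upper bound I bound $v^\top\widehat\Sigma_E v\le\lambda_{\max}(\widehat\Sigma_E)$ and take the supremum over $v$. Because $\widehat\Sigma_E$ and $\I_p$ commute, both bounds are in fact attained: the eigenvalues of $\widehat\Sigma_{\mathrm{shrink}}$ are exactly $(1-\gamma)\lambda_i(\widehat\Sigma_E)+\gamma c$. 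One could instead diagonalize $\widehat\Sigma_E=U\Lambda U^\top$ and read off all the eigenvalues directly, which gives an equivalent one-line argument.

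For the final claim, suppose $\tr(\widehat\Sigma_E)>0$. Then $c>0$, and $\gamma>0$ by hypothesis, so $\lambda_{\min}(\widehat\Sigma_{\mathrm{shrink}})\ge\gamma c>0$. Hence $\widehat\Sigma_{\mathrm{shrink}}$ is strictly positive definite, and this holds even when $\widehat\Sigma_E$ is singular, as happens when $p>n$.

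The remaining step, and the only one where I have to decide what "controlled" means, is to make the condition-number statement explicit. I would divide the two bounds:
\[
\kappa(\widehat\Sigma_{\mathrm{shrink}})\le\frac{(1-\gamma)\lambda_{\max}(\widehat\Sigma_E)+\gamma c}{\gamma c}=1+\frac{1-\gamma}{\gamma}\cdot\frac{\lambda_{\max}(\widehat\Sigma_E)}{c}.
\]
Since $c\ge\lambda_{\max}(\widehat\Sigma_E)/p$, this can be further bounded by $1+p(1-\gamma)/\gamma$, a bound that depends only on $\gamma$ and $p$. The first form depends only on $\gamma$ and the ratio $\lambda_{\max}(\widehat\Sigma_E)/c$, so it is determined by $\gamma$ and the spectrum of $\widehat\Sigma_E$, which is exactly what the proposition asserts. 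I expect no genuine obstacle here.
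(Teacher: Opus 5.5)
Your proof is correct and follows the paper's argument. Both use the Rayleigh-quotient identity $v^\top\widehat\Sigma_{\mathrm{shrink}}v=(1-\gamma)v^\top\widehat\Sigma_Ev+\gamma c$ to get the two bounds and positive definiteness, and your explicit bound $\kappa\le 1+\frac{1-\gamma}{\gamma}\lambda_{\max}(\widehat\Sigma_E)/c\le 1+p(1-\gamma)/\gamma$ is a valid sharpening of the condition-number claim that the paper leaves informal.

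One small correction to a side remark: the upper bound is always attained, but the smallest eigenvalue is $(1-\gamma)\lambda_{\min}(\widehat\Sigma_E)+\gamma c$. So the lower bound is attained only when $\widehat\Sigma_E$ is singular or $\gamma=1$.
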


\section{Simulation study}
\label{sec:sim}
\subsection{Design}
The updated companion benchmark uses a fully reproducible spatially correlated Gaussian block that mimics a small region-of-interest voxel grid. Variables are placed on a two-dimensional lattice and the residual covariance is generated from an exponential spatial kernel plus a nugget. The simulation uses $n=80$ subjects and a block size of $p=40$ after voxel subsampling and variance filtering. The design matrix contains an intercept and one age-like covariate; the first 10 variables have a stronger slope than the remainder. The completed block is standardized before masking. Pseudo-missing values are generated completely at random at the target rate, and diagnostics are evaluated on the withheld cells.

The revised HIMA and HIMCE runs share a deterministic empirical-Bayes covariance fit followed by a short chain with burn-in 8, thin 1, and 2 inner updates per kept draw. In the primary spatial benchmark, $p=40$ exceeds the exact-refresh threshold, so HIMCE stays on the empirical-Bayes covariance branch. The reported HIMCE results use screened stochastic local-ridge mean updates, a light scalar bridge with $\mathrm{df}=18$ and $\mathrm{bridge\_max}=1.6$, followed by observed-cell center and empirical-quantile scale calibration learned from stored draws.

\subsection{Methods compared and tuning}
We compare HIMA, HIMCE, and screened MICE. HIMA is the deterministic empirical-Bayes covariance baseline and uses the covariance mode \eqref{eq:eb_mode_hima}. HIMCE uses the same covariance backbone in the spatial simulation, replaces deterministic mean updates by screened stochastic local-ridge draws, keeps the light scalar bridge in the active chain, and applies the observed-cell center/scale calibration described above. MICE is implemented with screened Gaussian conditional regressions based on a sparse predictor matrix.

All methods use the same standardized block, auxiliary design matrix $X$, pseudo-missing mask, and $M=20$ stored completed datasets. Screened MICE uses Gaussian FCS with $X$ forced into each conditional model, at most 20 within-block predictors retained by largest absolute pairwise correlation on available cases, and 10 FCS iterations per imputation. No lasso or elastic-net penalty is added; stabilization comes from standardization, predictor screening, and the numerical singularity protection in the Gaussian FCS implementation. HIMA and HIMCE share the deterministic empirical-Bayes mode fit with 18 iterations. HIMCE then uses burn-in 8, thin 1, two inner updates per stored draw, scalar bridge degrees of freedom 18, and maximum bridge factor 1.6. Thus all methods receive matched input information, while conditional-model complexity is bounded explicitly for MICE.

\subsection{Metrics}
Let $\mathcal{M}$ index withheld cells. We report RMSE and MAE of the posterior mean across imputations, wall-clock runtime, PIT central mass $\widehat p_{0.4,0.6}$, central 50\% coverage, and PIT-consistent 90\% and 95\% empirical coverages. We also report PIT summaries $(\mathrm{pit\_mean},\mathrm{pit\_sd},\mathrm{pit\_ks})$ and marginal distribution gaps $(\mathrm{mean\_gap},\mathrm{sd\_gap},\mathrm{iqr\_gap},\mathrm{qq\_gap})$. For HIMCE, predictive metrics are computed after the observed-cell calibration step. The runtime column records the shared empirical-Bayes fit plus the active chain; the calibration pass is exported separately in the companion output.

\subsection{Representative results}
Tables~\ref{tab:sim_metrics} and \ref{tab:sim_pit} summarize the updated primary spatial benchmark as means and standard deviations across repeated pseudo-missing masks and random seeds. The main empirical message is straightforward. HIMCE improves substantially on HIMA in posterior-mean error and narrows the calibration gap appreciably, while preserving essentially the same active-chain runtime. Relative to screened MICE, HIMCE attains slightly smaller posterior-mean RMSE and MAE in this benchmark and runs in less than half the time, but its interval coverage remains lower.

\begin{table}[t]
\centering
\footnotesize
\setlength{\tabcolsep}{4pt}
\caption{Primary pseudo-missing simulation summary from the updated companion benchmark. Entries are mean (sd) across repeated pseudo-missing masks and random seeds. RMSE and MAE are computed from posterior means across imputations. HIMCE denotes the final stored draws after observed-cell center/scale calibration.}
\label{tab:sim_metrics}
\begin{adjustbox}{max width=\textwidth}
\begin{tabular}{lccccccc}
\toprule
Method & RMSE & MAE & Time (s) & $\widehat p_{0.4,0.6}$ & IQR cov. & cov$_{90}$ & cov$_{95}$\\
\midrule
HIMA       & 0.7433 (0.0288) & 0.5829 (0.0226) & 13.3790 (2.8849) & 0.0870 (0.0134) & 0.2406 (0.0207) & 0.5517 (0.0162) & 0.7671 (0.0176) \\
HIMCE & 0.6340 (0.0198) & 0.5004 (0.0158) & 13.6240 (3.0129) & 0.1438 (0.0198) & 0.3616 (0.0260) & 0.7555 (0.0282) & 0.8758 (0.0173) \\
MICE       & 0.6462 (0.0154) & 0.5124 (0.0129) & 31.6800 (5.7653) & 0.2267 (0.0148) & 0.5510 (0.0160) & 0.9277 (0.0130) & 0.9651 (0.0066) \\
\bottomrule
\end{tabular}
\end{adjustbox}
\end{table}

\begin{table}[t]
\centering
\footnotesize
\caption{Primary simulation PIT summary from the updated companion benchmark. Entries are mean (sd) across repeated pseudo-missing masks and random seeds. Uniform is the target.}
\label{tab:sim_pit}
\begin{adjustbox}{max width=0.82\textwidth}
\begin{tabular}{lccc}
\toprule
Method & pit\_mean & pit\_sd & pit\_ks\\
\midrule
HIMA       & 0.4930 (0.0181) & 0.3864 (0.0049) & 0.1984 (0.0150) \\
HIMCE & 0.4963 (0.0241) & 0.3388 (0.0086) & 0.1145 (0.0227) \\
MICE       & 0.4983 (0.0108) & 0.2718 (0.0045) & 0.0453 (0.0093) \\
\bottomrule
\end{tabular}
\end{adjustbox}
\end{table}

The simulation diagnostics should be read together rather than one column at a time. The error summary in Table~\ref{tab:sim_metrics} shows that HIMCE has the smallest posterior-mean RMSE and MAE, edging out screened MICE while remaining effectively tied with HIMA in active-chain runtime. The calibration summary is more nuanced. Relative to HIMA, HIMCE raises IQR coverage from 0.2406 to 0.3616, 90\% coverage from 0.5517 to 0.7555, and 95\% coverage from 0.7671 to 0.8758. That is a substantial practical gain, but it still leaves a calibration gap relative to screened MICE, whose corresponding values are 0.5510, 0.9277, and 0.9651. The PIT summary in Table~\ref{tab:sim_pit} tells the same story: HIMCE is much closer to uniformity than HIMA, but MICE remains the best-calibrated method overall.

Two diagnostic patterns are worth emphasizing. First, posterior means are naturally tighter than pooled draws because the simulation block is standardized and the mean step is ridge-regularized; a narrow posterior-mean density is therefore not, by itself, evidence of under-dispersion. Second, residual edge mass in PIT should be interpreted as genuine under-dispersion only when it persists after rank-cell randomization and lines up with low PIT-consistent coverage and a narrow pooled-draw density. On that combined reading, the updated HIMCE implementation behaves like a fast middle ground between HIMA and screened MICE: it recovers most of the point-prediction advantage of the covariance-mode chain while materially improving calibration, though not fully matching MICE.

\subsection{Posterior predictive distribution comparisons}
Figure~\ref{fig:sim_dist_compare} summarizes two complementary views. The upper row overlays the empirical density of withheld true values with the empirical densities of pooled imputation draws and posterior means for each method. The lower row reports randomized rank-cell PIT histograms.

\begin{figure*}[t]
\centering
\thispagestyle{empty}
\IfFileExists{figures/sim_dist_compare.pdf}{\includegraphics[width=0.88\textwidth]{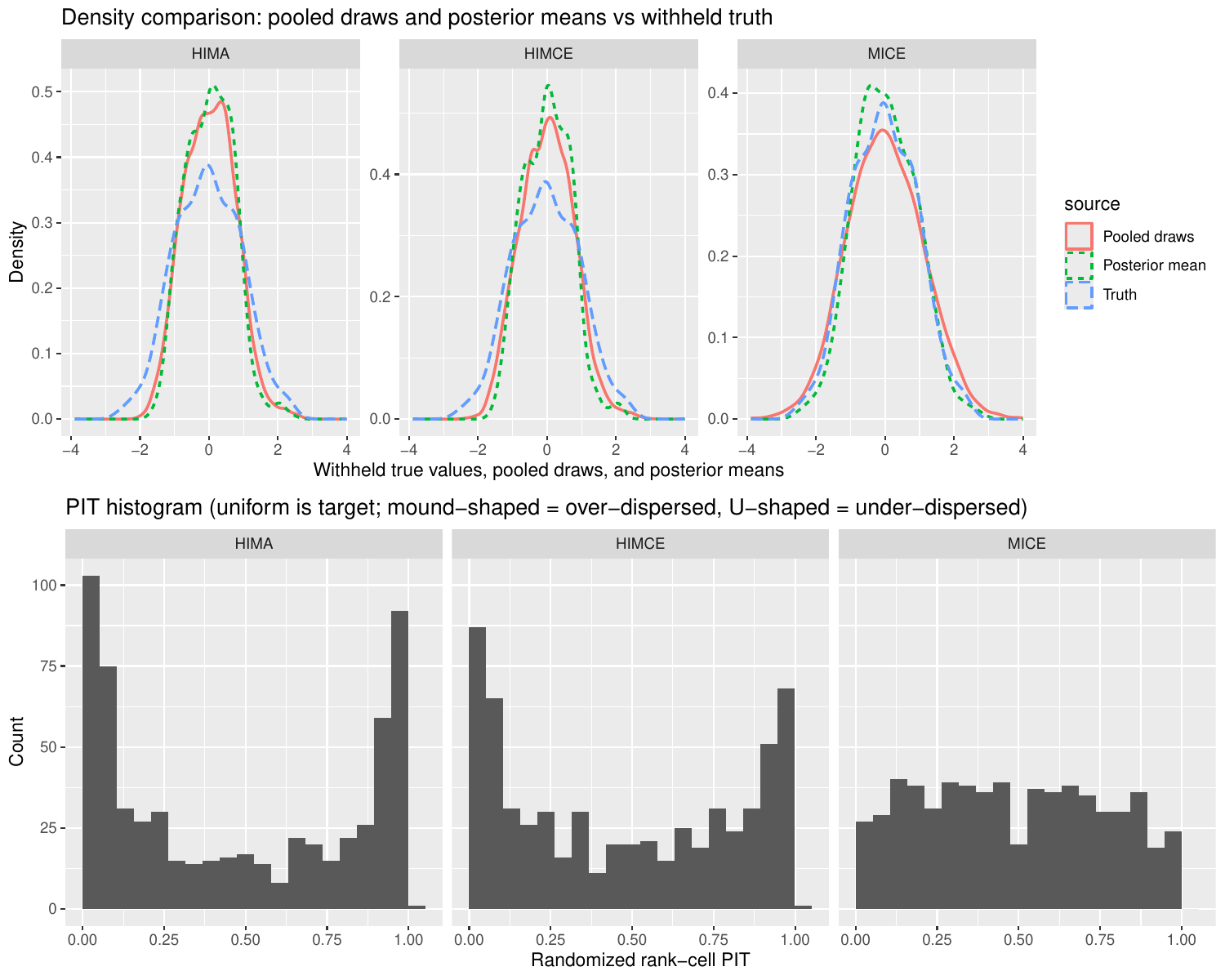}}{\IfFileExists{figures/sim_dist_compare.png}{\includegraphics[width=0.88\textwidth]{figures/sim_dist_compare.png}}{\fbox{\parbox{0.95\textwidth}{Run the companion R Markdown to generate \texttt{figures/sim\_dist\_compare.pdf}.}}}}
\caption{Distributional comparisons under pseudo-missing masking in the spatially correlated simulation. The upper row overlays withheld truths, pooled predictive draws, and posterior means. The lower row reports randomized rank-cell PIT histograms. Uniformity is the target. Excess central mass suggests over-dispersion, while excess mass near 0 and 1 suggests under-dispersion.}
\label{fig:sim_dist_compare}
\end{figure*}

\section{Real-data example: repeated NHANES pseudo-missing benchmark}
\label{sec:nhanes}
We retain a low-dimensional benchmark based on \texttt{nhanes2}, but in the revised companion it is evaluated through repeated pseudo-missing masks rather than a single small split. This change is important because only a small number of complete rows are available once the target block is restricted to \texttt{bmi} and \texttt{chl}. With such a small effective sample, one-shot PIT histograms and interval coverages are too discrete to support a stable ranking. Repeated masking therefore provides a more honest real-data illustration of the code path while keeping the example simple enough to inspect directly.

\subsection{Pseudo-missing setup}
We restrict to rows with complete outcomes, repeatedly mask a fraction of entries completely at random, apply each imputation method, and average the resulting pseudo-missing summaries. We take \texttt{bmi} and \texttt{chl} as the target block $Y$ and use age as the auxiliary design matrix $X$. HIMA uses the same empirical-Bayes covariance-mode update as in the spatial benchmark. HIMCE still uses the shared HIMA fit for initialization, but because the block has only $p=2$ variables it switches to the exact conditional inverse-Wishart covariance refresh \eqref{eq:sigma_post_given_B}. In this low-dimensional branch the default mean step is stochastic, scalar-bridge inflation is turned off because exact covariance uncertainty is already available, and the same observed-cell stored-draw calibration logic is retained. MICE remains the screened Gaussian FCS comparator.

\begin{table}[t]
\centering
\footnotesize
\setlength{\tabcolsep}{3.5pt}
\caption{Repeated NHANES pseudo-missing benchmark from the updated companion. Entries are mean (sd) across repeated pseudo-missing masks on the complete rows of \texttt{mice::nhanes2}. In this $p=2$ example, HIMCE uses the exact conditional covariance refresh rather than the high-dimensional covariance-mode plug-in.}
\label{tab:nhanes_metrics}
\begin{adjustbox}{max width=\textwidth}
\begin{tabular}{lccccccc}
\toprule
Method & RMSE & MAE & Time (s) & $\widehat p_{0.4,0.6}$ & IQR cov. & cov$_{90}$ & cov$_{95}$\\
\midrule
HIMA       & 0.9415 (0.3822) & 0.8117 (0.3315) & 0.1748 (0.0394) & 0.0213 (0.0649) & 0.0842 (0.1276) & 0.2229 (0.2057) & 0.5958 (0.1839) \\
HIMCE & 1.0244 (0.3323) & 0.9035 (0.3004) & 0.1565 (0.0396) & 0.1198 (0.1515) & 0.3172 (0.1998) & 0.7337 (0.2127) & 0.9115 (0.1222) \\
MICE       & 1.1177 (0.5610) & 0.9696 (0.4858) & 0.5713 (0.1050) & 0.1543 (0.1898) & 0.4225 (0.2757) & 0.8712 (0.2200) & 0.9343 (0.1378) \\
\bottomrule
\end{tabular}
\end{adjustbox}
\end{table}

The corresponding PIT summaries also favored HIMCE over HIMA: the mean PIT was 0.5009 for HIMCE versus 0.4842 for HIMA, and the PIT Kolmogorov--Smirnov statistic was 0.4094 for HIMCE versus 0.5699 for HIMA.

The NHANES illustration should be interpreted cautiously because it is based on only 13 complete rows, but it is still informative. The exact-refresh HIMCE branch raises 95\% coverage from 0.5958 under HIMA to 0.9115, while still improving on screened MICE in RMSE, MAE, and runtime. HIMA remains slightly better than HIMCE on point prediction in this tiny benchmark, but the coverage gain from restoring exact covariance uncertainty is substantial. We therefore treat the repeated NHANES results as a transfer-and-calibration illustration for the adaptive HIMCE design rather than as the main empirical claim.

\section{Practical considerations for neuroimaging and spatiotemporal arrays}
\label{sec:practical}
A practical operating mode for imaging is blockwise processing, such as ROI-wise or patch-wise operation. Each ROI or local patch is reshaped into an $n\times p$ block, and HIMA or HIMCE is run independently per block, optionally with shared subject-level covariates in $X$. This avoids a single massive covariance update across the entire image and aligns with common downstream ROI-wise analyses.

For spatiotemporal arrays, such as sensor-by-time or vertex-by-time blocks, block definitions should reflect the scientific target. One may treat each spatial location as a block of time points and include temporal basis functions in $X$, or impose separable spatial-temporal structure through a Kronecker covariance representation. The covariance-mode principle is compatible with either choice: the algorithm updates a stabilized covariance representation, but that representation may exploit spatial, temporal, or separable structure.

Initialization and tuning are important in high dimension. A stable default fills missing entries by design-aware column means or ridge predictions, initializes $B$ by ridge regression, and initializes $\Sigma$ by the HIMA empirical-Bayes covariance mode \eqref{eq:eb_mode_hima}. The ridge parameter $\alpha$ controls shrinkage in the mean update, and the hypergeometric truncation level controls the numerical approximation in \eqref{eq:rho_bar_hima}--\eqref{eq:lambda_hima}. In standardized blocks, $\alpha\in[1,100]$ and a truncation level around 25 terms are stable defaults in our experiments.

Diagnostics should be routine rather than optional. PIT is a diagnostic, not an imputation method, and PIT and coverage should be computed from the same empirical predictive CDF when only a modest number of imputations is stored. The observed-cell calibration layer is deliberately lightweight: it is learned from already stored draws, first corrects the center on observed cells, and then rescales draw deviations only when doing so improves PIT-consistent coverage without an unacceptable PIT or marginal-shape penalty.

The Gaussian block model is a working model. Bounded, skewed, heavy-tailed, or multimodal variables can show tail undercoverage, PIT edge mass, or support violations even when RMSE is acceptable. Standardization and monotone transformations can improve robustness; donor-based adjustments such as predictive mean matching are useful when support preservation is important. For heavier-tailed imaging or spatiotemporal blocks, robust covariance regularization or heavier-tailed residual models are natural extensions.

Computationally, the main gain comes from replacing repeated high-dimensional covariance sampling by stabilized mode updates and from exploiting shared linear algebra across columns. Memory pressure is usually driven by storing the completed block and the covariance iterate rather than by the ridge mean update itself, which is another reason ROI-wise and patch-wise processing are useful in imaging pipelines.

\section{Discussion}
\label{sec:discussion}
HIMCE targets a common bottleneck in neuroimaging and spatiotemporal analysis: multiple imputation for blocks with many correlated continuous variables. The method combines a regression mean model with a high-dimensional covariance plug-in, providing a practical alternative to high-dimensional FCS methods and to exact MVN data augmentation when repeated covariance sampling is fragile. In the experiments, that plug-in is the empirical-Bayes HIMA covariance estimator; in the theory, it is represented by a generic covariance-mode update with complete fixed-dimension proofs.

The updated benchmark clarifies the distinction between algorithm and calibration. In large blocks, HIMCE remains a covariance-mode method: the covariance step is the fast empirical-Bayes HIMA plug-in, and finite-sample calibration is applied afterward to stored draws. In small blocks, the main coverage repair comes from restoring exact covariance uncertainty through the inverse-Wishart conditional refresh. Empirically, HIMCE preserves the speed advantage of HIMA and improves point prediction and calibration, although screened MICE remains better calibrated in the spatial benchmark.

The benchmark also reinforces why diagnostic consistency matters. Randomized rank-cell PIT and PIT-consistent empirical-CDF coverages give a coherent view of calibration. Once that consistency is enforced, the earlier contradiction between ``good PIT'' and ``low nominal coverage'' largely disappears. This is especially important when the number of stored imputations is moderate, because naive interpolated quantiles can make the tails look too narrow purely as a finite-ensemble artifact.

More broadly, HIMCE should be interpreted as a posterior-motivated stochastic approximation to Gaussian joint-model imputation, not as an exact joint posterior sampler in its high-dimensional covariance-mode branch. It preserves the Gaussian conditional imputation structure and propagates mean-parameter uncertainty through stochastic coefficient or local-ridge draws. It approximates covariance uncertainty by replacing full matrix-valued inverse-Wishart sampling with covariance-mode updating, optionally augmented by a scalar bridge. The fixed-dimension theory justifies the generic mode plug-in principle asymptotically, but the empirical-Bayes HIMA covariance backbone used in the experiments is a high-dimensional implementation choice whose finite-sample adequacy remains an empirical question. Pseudo-missing calibration and marginal shape checks are therefore part of responsible use of the method, not optional embellishments.

Several extensions remain promising. One direction is to replace the scalar predictive bridge by a structured low-rank-plus-diagonal covariance expansion. Another is to generalize the observed-cell calibration layer to mixed data types while preserving the core covariance-mode chain. A third is to combine HIMCE with explicit spatial covariance models or separable spatiotemporal structure.

\section*{Funding}
This work was partially supported by the National Science Foundation under grants DMS-1924792 and DMS-2318925.

\section*{Code and data availability}
R code is provided with the submission to implement HIMCE and reproduce the numerical experiments reported in this paper. The code includes the HIMA and screened MICE comparators, the primary spatial pseudo-missing benchmark, and the repeated NHANES illustration. The diagnostic code computes randomized rank-cell PIT values and PIT-consistent central-interval coverages from the same empirical predictive CDF. The NHANES data used in the illustration are publicly available through the \texttt{mice} R package.

\section*{Declaration of competing interest}
The authors declare that they have no known competing financial interests or personal relationships that could have appeared to influence the work reported in this paper.

\section*{Declaration of generative AI use}
During the preparation of this work, the authors used AI-assisted tools to support language editing, organization, and manuscript revision. The authors reviewed and edited the content and take full responsibility for the final manuscript.

\appendix

\section{Proofs and additional results}
\label{app:proofs}

Throughout the appendix, \(\|\cdot\|_F\) denotes the Frobenius norm and \(\|\cdot\|_{\op}\) the operator norm. For a symmetric matrix \(A\), write \(\lambda_{\min}(A)\) and \(\lambda_{\max}(A)\) for its extreme eigenvalues. We use \(\mathrm{vec}(ABC)=(C^\top\otimes A)\mathrm{vec}(B)\). If \(Z\sim \mathrm{MN}_{a\times b}(0,U,V)\), then \(\mathrm{vec}(Z)\sim \mathcal{N}_{ab}(0,V\otimes U)\) and
\begin{equation}
\E\|Z\|_F^2=\tr(U)\tr(V).
\label{eq:mn_second_moment}
\end{equation}

\subsection{Proof of Lemma~\ref{lem:cond_mvn}}
Write \(Y_O=Y_{i,\mathcal{O}_i}\), \(Y_M=Y_{i,\mathcal{M}_i}\), and define \(\mu_O,\mu_M\) and the covariance blocks analogously. Under \eqref{eq:model},
\[
\begin{pmatrix}Y_O\\Y_M\end{pmatrix}
\sim \mathcal{N}\!\left(
\begin{pmatrix}\mu_O\\\mu_M\end{pmatrix},
\begin{pmatrix}\Sigma_{OO}&\Sigma_{OM}\\\Sigma_{MO}&\Sigma_{MM}\end{pmatrix}
\right).
\]
Since \(\Sigma\succ0\), \(\Sigma_{OO}\succ0\). Let
\[
S=\Sigma_{MM}-\Sigma_{MO}\Sigma_{OO}^{-1}\Sigma_{OM}
\]
be the Schur complement. The Schur complement is positive definite. The block inverse identity gives
\[
\Sigma^{-1}=\begin{pmatrix}
\Sigma_{OO}^{-1}+\Sigma_{OO}^{-1}\Sigma_{OM}S^{-1}\Sigma_{MO}\Sigma_{OO}^{-1} & -\Sigma_{OO}^{-1}\Sigma_{OM}S^{-1}\\
-S^{-1}\Sigma_{MO}\Sigma_{OO}^{-1} & S^{-1}
\end{pmatrix}.
\]
Substituting this inverse in the joint Gaussian density and retaining only terms depending on \(y_M\) gives
\[
-\frac12(y_M-\mu_M)^\top S^{-1}(y_M-\mu_M)
+(y_M-\mu_M)^\top S^{-1}\Sigma_{MO}\Sigma_{OO}^{-1}(y_O-\mu_O).
\]
Completing the square yields the kernel of a normal law with mean \(\mu_M+\Sigma_{MO}\Sigma_{OO}^{-1}(y_O-\mu_O)\) and covariance \(S\), which proves \eqref{eq:cond_mvn}. \qed

\subsection{Proof of Theorem~\ref{thm:conjugate_complete}}
The likelihood under \eqref{eq:matrixnormal} has kernel
\[
p(Y\mid B,\Sigma,X)\propto |\Sigma|^{-n/2}
\exp\left\{-\frac12\tr\big[\Sigma^{-1}(Y-XB)^\top(Y-XB)\big]\right\}.
\]
Multiplying by the matrix-normal and inverse-Wishart prior kernels gives
\begin{equation}
 p(B,\Sigma\mid Y,X)\propto
 |\Sigma|^{-(\nu_0+n+k+p+1)/2}
 \exp\left\{-\frac12\tr\big[\Sigma^{-1}A(B)\big]\right\},
\label{eq:app_joint_kernel}
\end{equation}
where
\[
A(B)=S_0+(Y-XB)^\top(Y-XB)+(B-B_0)^\top V_0^{-1}(B-B_0).
\]
Expanding the two quadratic terms in \(B\), collecting powers of \(B\), and using \(V_n=(X^\top X+V_0^{-1})^{-1}\) and \(B_n=V_n(X^\top Y+V_0^{-1}B_0)\) gives
\[
A(B)=S_n+(B-B_n)^\top V_n^{-1}(B-B_n),
\]
with \(S_n\) as in \eqref{eq:Sn}. Conditional on \(\Sigma\), \eqref{eq:app_joint_kernel} is therefore the kernel of \(\mathrm{MN}_{k\times p}(B_n,V_n,\Sigma)\), proving \eqref{eq:B_post}. The integral of this kernel over \(B\) is proportional to
\[
|V_n|^{p/2}|\Sigma|^{k/2}.
\]
Thus the factor \(|\Sigma|^{-k/2}\) in \eqref{eq:app_joint_kernel} is cancelled. The remaining kernel is \(\mathrm{IW}(\nu_0+n,S_n)\), proving \eqref{eq:sigma_post}. Holding \(B\) fixed gives \eqref{eq:sigma_post_given_B}. \qed

\subsection{Proof of Lemma~\ref{lem:iw_mode}}
For \(\Sigma\sim \mathrm{IW}(\nu,S)\), the log-density up to an additive constant is
\[
\ell(\Sigma)=-\frac{\nu+p+1}{2}\log|\Sigma|-\frac12\tr(S\Sigma^{-1}).
\]
Using \(d\log|\Sigma|=\tr(\Sigma^{-1}d\Sigma)\) and \(d\tr(S\Sigma^{-1})=-\tr(\Sigma^{-1}S\Sigma^{-1}d\Sigma)\), the first-order condition is
\[
(\nu+p+1)\Sigma^{-1}-\Sigma^{-1}S\Sigma^{-1}=0.
\]
Multiplication by \(\Sigma\) on the left gives \(S\Sigma^{-1}=(\nu+p+1)I_p\), so the only stationary point is \(\Sigma=S/(\nu+p+1)\). Since the inverse-Wishart density is continuous, vanishes on the boundary and at infinity in the positive definite cone, and has a unique stationary point, that point is the unique mode. \qed

\subsection{Proof of Theorem~\ref{thm:da_correct}}
Let
\[
\pi(y_{\mathrm{mis}},b,\sigma)=p(y_{\mathrm{mis}},b,\sigma\mid Y_{\mathrm{obs}},X)
\]
be the joint posterior. Algorithm~\ref{alg:exact} updates \(Y_{\mathrm{mis}}\), then \(\Sigma\), then \(B\), using the corresponding full conditional distributions under \(\pi\). For any bounded measurable test function \(h\), repeated conditioning gives
\begin{align*}
&\int \pi(dy_{\mathrm{mis}},db,d\sigma)
   \int \pi(dy'_{\mathrm{mis}}\mid b,\sigma)
   \int \pi(d\sigma'\mid y')
   \int \pi(db'\mid \sigma',y')h(y'_{\mathrm{mis}},b',\sigma') \\
&\qquad = \int h(y_{\mathrm{mis}},b,\sigma)\,\pi(dy_{\mathrm{mis}},db,d\sigma),
\end{align*}
where \(y'=(Y_{\mathrm{obs}},y'_{\mathrm{mis}})\). Thus \(\pi\) is invariant for the transition kernel. This is the standard Gibbs-sampler invariance argument applied to the three blocks \((Y_{\mathrm{mis}},\Sigma,B)\). \qed

\subsection{Auxiliary concentration and continuity facts}
\begin{lemma}[Inverse-Wishart concentration]
\label{lem:app_iw_conc}
Let \(\Sigma_\nu\sim \mathrm{IW}(\nu,S_\nu)\) with fixed dimension \(p\). If \(\nu\to\infty\) and \(S_\nu/\nu\to\Sigma_0\succ0\) in Frobenius norm, then \(\Sigma_\nu\to\Sigma_0\) in probability.
\end{lemma}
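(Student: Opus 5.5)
The plan is to invert: $W_\nu=\Sigma_\nu^{-1}\sim\mathrm{W}_p(\nu,S_\nu^{-1})$, a Wishart law with $\nu$ degrees of freedom and scale $S_\nu^{-1}$. With the parametrization $\mathrm{IW}(\nu,S)$ used in Lemma~\ref{lem:iw_mode}, the Wishart mean is $\E W_\nu=\nu S_\nu^{-1}$. We will show that $W_\nu/\nu$ concentrates at $\Sigma_0^{-1}$, i.e.\ $\nu^{-1}\Sigma_\nu^{-1}\to\Sigma_0^{-1}$ in probability. The conclusion then needs one more rescaling, which we address at the end.

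First, since $S_\nu/\nu\to\Sigma_0\succ0$ and matrix inversion is continuous on the positive definite cone, $\nu S_\nu^{-1}\to\Sigma_0^{-1}$. Write $S_\nu^{-1}=L_\nu L_\nu^\top$. Then $W_\nu\overset{d}{=}L_\nu\big(\sum_{r=1}^{\nu} z_r z_r^\top\big)L_\nu^\top$ with $z_r$ i.i.d.\ $\mathcal{N}_p(0,\I_p)$. For non-integer $\nu$, the same conclusion follows from the entrywise Wishart variance formula below.

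Next, rescale: $W_\nu/\nu = (\sqrt{\nu}L_\nu)\,G_\nu\,(\sqrt{\nu}L_\nu)^\top$, where $G_\nu=\nu^{-2}\sum_r z_rz_r^\top$. This normalization is awkward, so instead we compute second moments directly. For $W\sim\mathrm{W}_p(\nu,\Psi)$, $\mathrm{Var}(W_{ij})=\nu(\Psi_{ij}^2+\Psi_{ii}\Psi_{jj})$. With $\Psi=S_\nu^{-1}=O(\nu^{-1})$, this gives $\E\|W_\nu-\nu S_\nu^{-1}\|_F^2=O(\nu\cdot\nu^{-2})=O(\nu^{-1})\to0$.

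Hence $W_\nu-\nu S_\nu^{-1}\to0$ in probability. Combined with the deterministic limit from the first step, $W_\nu=\Sigma_\nu^{-1}\to\Sigma_0^{-1}$ in probability. This is where the two scalings $W_\nu$ and $W_\nu/\nu$ must be reconciled. The statement as written says $\Sigma_\nu\to\Sigma_0$, which is consistent only if the $\nu$-scaling works out. We check it: $\E\Sigma_\nu=S_\nu/(\nu-p-1)\approx S_\nu/\nu\to\Sigma_0$, so the intended claim is consistent. In Wishart terms, $\E W_\nu=\nu S_\nu^{-1}\to\Sigma_0^{-1}$ without any division by $\nu$, so $W_\nu$ itself, not $W_\nu/\nu$, is the quantity that converges. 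With $\Psi=S_\nu^{-1}=O(\nu^{-1})$, the variance bound $O(\nu^{-1})$ is exactly the one computed above for $W_\nu$.

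Finally, apply the continuous mapping theorem with $A\mapsto A^{-1}$, which is continuous at $\Sigma_0^{-1}\succ0$. This gives $\Sigma_\nu=W_\nu^{-1}\to\Sigma_0$ in probability. The main obstacle is bookkeeping the parametrization, namely checking that $\Sigma^{-1}\sim\mathrm{W}(\nu,S^{-1})$ matches the density exponent $-(\nu+p+1)/2$ used in Lemma~\ref{lem:iw_mode}. Everything else is a Chebyshev bound and continuity of inversion.
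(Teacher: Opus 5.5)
Your final argument is correct, but it takes a different route from the paper. The paper stays on the covariance scale and writes $\Sigma_\nu\stackrel{d}{=}S_\nu^{1/2}W^{-1}S_\nu^{1/2}$ with $W\sim W_p(\nu,I_p)$ a standard Wishart, so the randomness lives in a law that does not depend on $S_\nu$. The law of large numbers gives $\nu^{-1}W\to I_p$, hence $\nu W^{-1}\to I_p$, and $\Sigma_\nu=(S_\nu/\nu)^{1/2}(\nu W^{-1})(S_\nu/\nu)^{1/2}\to\Sigma_0$ by continuity.

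You instead pass to the precision scale, $\Sigma_\nu^{-1}\sim W_p(\nu,S_\nu^{-1})$, and use only its first two moments. The mean is $\nu S_\nu^{-1}=(S_\nu/\nu)^{-1}\to\Sigma_0^{-1}$, and $\E\|\Sigma_\nu^{-1}-\nu S_\nu^{-1}\|_F^2=\nu\{\|S_\nu^{-1}\|_F^2+(\tr S_\nu^{-1})^2\}=O(\nu^{-1})$. Chebyshev and continuity of inversion at $\Sigma_0^{-1}\succ0$ then finish the proof.

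Your version gives an explicit $L^2$ rate and needs no matrix square root. It also covers non-integer $\nu$ directly, as in $\nu_n=\nu_0+n$ in the application, because the moment formulas hold for every real $\nu>p-1$. The paper's almost-sure LLN step, by contrast, is most naturally read through a sum-of-outer-products construction. The paper's version is shorter and cleanly separates the deterministic scale $S_\nu/\nu$ from a universal random factor.

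Do clean up the write-up. The opening goal $\nu^{-1}\Sigma_\nu^{-1}\to\Sigma_0^{-1}$ is false; in fact $\nu^{-1}\Sigma_\nu^{-1}\to0$. The detour through $G_\nu$ is abandoned. You correct the scaling yourself later, so both should simply be deleted.
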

\noindent\textit{Proof.}
Use the representation \(\Sigma_\nu\stackrel{d}{=}S_\nu^{1/2}W_\nu^{-1}S_\nu^{1/2}\), where \(W_\nu\sim W_p(\nu,I_p)\). Since \(\nu^{-1}W_\nu\to I_p\) almost surely for fixed \(p\), we have \(\nu W_\nu^{-1}\to I_p\) almost surely. Combining this with \(S_\nu/\nu\to\Sigma_0\) and continuity of matrix multiplication proves the claim. \qed

\begin{lemma}[Total-variation continuity of Gaussian laws]
\label{lem:app_gaussian_tv}
Fix \(p\). On any set \(\{(\mu,\Sigma):mI_p\preceq \Sigma\preceq MI_p\}\) with \(0<m<M<\infty\), the map \((\mu,\Sigma)\mapsto \mathcal{N}_p(\mu,\Sigma)\) is continuous in total variation.
\end{lemma}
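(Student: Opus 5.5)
The plan is to bound the total-variation distance by Pinsker's inequality in terms of the Kullback--Leibler divergence between the two Gaussians, which has a closed form that is jointly continuous in the parameters. Fix a compact neighbourhood $K$ of a point $(\mu_0,\Sigma_0)$ with $mI_p\preceq\Sigma_0\preceq MI_p$. For $(\mu,\Sigma),(\mu',\Sigma')$ in the set $\{mI_p\preceq\Sigma\preceq MI_p\}$, Pinsker gives
\[
\TV\bigl(\mathcal{N}_p(\mu,\Sigma),\mathcal{N}_p(\mu',\Sigma')\bigr)\le\sqrt{\tfrac12\KL\bigl(\mathcal{N}_p(\mu,\Sigma)\,\|\,\mathcal{N}_p(\mu',\Sigma')\bigr)}.
\]
The KL divergence on the right is
\[
\tfrac12\Bigl[\tr(\Sigma'^{-1}\Sigma)-p+(\mu'-\mu)^\top\Sigma'^{-1}(\mu'-\mu)+\log\tfrac{|\Sigma'|}{|\Sigma|}\Bigr].
\]

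The main step is to show that this expression tends to zero as $(\mu',\Sigma')\to(\mu,\Sigma)$, uniformly on the set. Since $\Sigma'\succeq mI_p$, we have $\|\Sigma'^{-1}\|_{\op}\le 1/m$. Matrix inversion is Lipschitz there, because $\Sigma'^{-1}-\Sigma^{-1}=\Sigma'^{-1}(\Sigma-\Sigma')\Sigma^{-1}$, which gives $\|\Sigma'^{-1}-\Sigma^{-1}\|_F\le m^{-2}\|\Sigma-\Sigma'\|_F$. Each term then behaves as follows:
\begin{itemize}
\item the trace term satisfies $\tr(\Sigma'^{-1}\Sigma)-p=\tr\bigl((\Sigma'^{-1}-\Sigma^{-1})\Sigma\bigr)$, which is $O(\|\Sigma-\Sigma'\|_F)$ since $\|\Sigma\|_F\le\sqrt p\,M$;
\item the quadratic term is at most $m^{-1}\|\mu-\mu'\|^2$;
\item the log-determinant term vanishes in the limit because $\log\det$ is smooth on the cone $\{\Sigma\succeq mI_p\}$, with gradient $\Sigma^{-1}$ bounded in operator norm by $1/m$, so it is Lipschitz in Frobenius norm.
\end{itemize}
Together these give a modulus of continuity of the form $C(m,M,p)\bigl(\|\mu-\mu'\|+\|\Sigma-\Sigma'\|_F^{1/2}\bigr)$. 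This is in fact local uniform continuity in total variation, which is slightly stronger than needed.

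The only mildly delicate point is to take care that the eigenvalue bounds are used for both arguments, so that the constants do not depend on the point. I expect no real obstacle. As an alternative route, one could use Scheffé's lemma: the densities converge pointwise as the parameters converge, and $\int|f-f_n|\to0$ follows. This is even simpler and needs the eigenvalue bounds only to keep $\Sigma$ nonsingular. I would state the Pinsker bound as the main argument and the Scheffé argument as a remark, since the quantitative bound is what the proof of Theorem~\ref{thm:mode_plugin} will use: there the covariance arguments will be $\widehat\Sigma_{\mathrm{mode}}$ and posterior draws of $\Sigma$, both near $\Sigma_\star$ and hence eventually in such a set.
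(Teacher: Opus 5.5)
Your proposal is correct and takes the paper's route: Pinsker's inequality combined with the closed-form Gaussian Kullback--Leibler divergence, whose continuity on the eigenvalue-bounded set gives the result. Your term-by-term Lipschitz estimates spell out the continuity that the paper asserts in one line. They in fact give a modulus depending only on $(m,M,p)$, so the continuity is uniform over the whole set, not merely local.
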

\noindent\textit{Proof.}
Pinsker's inequality and the Gaussian Kullback--Leibler formula give
\[
\TV(P,Q)^2\le \frac12\KL(P,Q),
\]
where, for \(P=\mathcal{N}_p(\mu,\Sigma)\) and \(Q=\mathcal{N}_p(\mu',\Sigma')\),
\[
\KL(P,Q)=\frac12\left\{\tr(\Sigma'^{-1}\Sigma)+
(\mu'-\mu)^\top\Sigma'^{-1}(\mu'-\mu)-p+
\log\frac{|\Sigma'|}{|\Sigma|}\right\}.
\]
The right side is continuous in \((\mu,\Sigma,\mu',\Sigma')\) on eigenvalue-bounded sets and equals zero when the two parameter pairs coincide. \qed

\subsection{Proof of Theorem~\ref{thm:consistency_complete}}
Write \(Y=XB_\star+E\), where the rows of \(E\) are i.i.d. \(\mathcal{N}_p(0,\Sigma_\star)\). From \eqref{eq:BnVn},
\begin{equation}
B_n-B_\star=V_nX^\top E+V_nV_0^{-1}(B_0-B_\star).
\label{eq:app_Bn_decomp}
\end{equation}
Since \(n^{-1}X^\top X\to Q_X\succ0\), \(\|V_n\|_{\op}=O(n^{-1})\). Also \(X^\top E=O_p(n^{1/2})\) in Frobenius norm, so the first term in \eqref{eq:app_Bn_decomp} is \(O_p(n^{-1/2})\), while the second is \(O(n^{-1})\). Hence \(B_n\to B_\star\) in probability.

Conditional on \(\Sigma\), \(B-B_n\sim \mathrm{MN}_{k\times p}(0,V_n,\Sigma)\). Therefore \eqref{eq:mn_second_moment} gives \(\E(\|B-B_n\|_F^2\mid \Sigma,Y,X)=\tr(V_n)\tr(\Sigma)\). Because \(\tr(V_n)=O(n^{-1})\) and \(\E\{\tr(\Sigma)\mid Y,X\}=\tr(S_n)/(\nu_n-p-1)=O_p(1)\), Markov's inequality implies
\[
\Pi(\|B-B_n\|_F>\varepsilon\mid Y,X)=o_p(1).
\]
Together with \(B_n\to B_\star\), this gives posterior concentration of \(B\) at \(B_\star\).

For \(\Sigma\), let \(R_n=Y-XB_n=E+X(B_\star-B_n)\). Expanding \(n^{-1}R_n^\top R_n\) gives
\[
\frac{1}{n}R_n^\top R_n=\frac{1}{n}E^\top E+o_p(1)=\Sigma_\star+o_p(1),
\]
where the last equality follows from the law of large numbers. The prior terms in \(S_n/n\) are negligible because \(S_0\) and \(V_0^{-1}\) are fixed and \(B_n=O_p(1)\). Thus \(S_n/n\to \Sigma_\star\). Since \(\nu_n/n\to1\), \(S_n/\nu_n\to\Sigma_\star\). Lemma~\ref{lem:app_iw_conc} applied conditionally to \(\Sigma\mid Y,X\sim\mathrm{IW}(\nu_n,S_n)\) gives posterior concentration of \(\Sigma\) at \(\Sigma_\star\). A union bound over the two concentration statements proves the theorem. \qed

\subsection{Proofs of Theorem~\ref{thm:mode_consistency} and Proposition~\ref{prop:eig_bounds}}
By Lemma~\ref{lem:iw_mode}, \(\widehat\Sigma_{\mathrm{mode}}=S_n/(\nu_n+p+1)\). The proof of Theorem~\ref{thm:consistency_complete} showed \(S_n/n\to\Sigma_\star\), while \((\nu_n+p+1)/n\to1\); hence \(\widehat\Sigma_{\mathrm{mode}}\to\Sigma_\star\), proving Theorem~\ref{thm:mode_consistency}.

For Proposition~\ref{prop:eig_bounds}, set \(\mu=\tr(\widehat\Sigma_E)/p\). For any unit vector \(v\),
\[
v^\top\widehat\Sigma_{\mathrm{shrink}}v=(1-\gamma)v^\top\widehat\Sigma_Ev+\gamma\mu.
\]
Since \(\widehat\Sigma_E\succeq0\), the lower bound follows. The upper bound follows from \(v^\top\widehat\Sigma_Ev\le\lambda_{\max}(\widehat\Sigma_E)\). If \(\mu>0\), the lower bound is strictly positive for all unit \(v\), so the matrix is positive definite. \qed

\subsection{Proof of Theorem~\ref{thm:mode_plugin}}
We prove the result for one fixed row with covariate vector \(x_i^\top\); any fixed finite collection of rows or missing entries is handled identically because \(p\) is fixed. Let \(P_\star=\mathcal{N}_p(x_i^\top B_\star,\Sigma_\star)\). By convexity of total variation in its first argument,
\[
\TV(\mathsf{P}_{n,i},P_\star)
\le \int \TV\{\mathcal{N}_p(x_i^\top B,\Sigma),P_\star\}\,\Pi_n(dB,d\Sigma).
\]
Theorem~\ref{thm:consistency_complete} places posterior mass in arbitrary small neighborhoods of \((B_\star,\Sigma_\star)\) with probability tending to one. On such neighborhoods, and for large \(n\), the covariance eigenvalues are bounded away from zero and infinity with high probability. Lemma~\ref{lem:app_gaussian_tv} then makes the integrand uniformly small; outside the neighborhood it is bounded by one. Thus \(\TV(\mathsf{P}_{n,i},P_\star)\to0\) in probability.

For the plug-in predictive law, Theorem~\ref{thm:mode_consistency} gives \(\widehat\Sigma_{\mathrm{mode}}\to\Sigma_\star\). Conditional on the plug-in covariance, the matrix-normal law \(\Pi_n(dB\mid\widehat\Sigma_{\mathrm{mode}})\) has mean \(B_n\) and row covariance \(V_n\), so the same second-moment argument used above gives concentration of \(B\) at \(B_\star\). Applying Lemma~\ref{lem:app_gaussian_tv} again yields \(\TV(\mathsf{P}^{\mathrm{mode}}_{n,i},P_\star)\to0\). The triangle inequality
\[
\TV(\mathsf{P}_{n,i},\mathsf{P}^{\mathrm{mode}}_{n,i})
\le \TV(\mathsf{P}_{n,i},P_\star)+\TV(\mathsf{P}^{\mathrm{mode}}_{n,i},P_\star)
\]
proves the claim. \qed

\section{Additional note on empirical-CDF diagnostics}
\label{app:ecdf}
The updated benchmark computes PIT and coverage from the same randomized empirical predictive CDF. This does not change HIMA or HIMCE; it only changes finite-ensemble calibration measurement. The identity behind \eqref{eq:pit_consistent_cov} is immediate: for any central level \(1-\alpha\), empirical coverage is exactly the fraction of PIT values lying in \([\alpha/2,1-\alpha/2]\). This prevents artificial disagreements between randomized empirical-rank PIT and interpolated quantile-based coverage in modest imputation ensembles.

\bibliographystyle{elsarticle-num}
\bibliography{references}

@book{Rubin2004,
  title        = {Multiple Imputation for Nonresponse in Surveys},
  author       = {Rubin, Donald B.},
  year         = {2004},
  publisher    = {Wiley},
  address      = {New York}
}

@article{vanBuuren2011,
  title   = {{mice}: Multivariate Imputation by Chained Equations in {R}},
  author  = {van Buuren, Stef and Groothuis-Oudshoorn, Karin},
  journal = {Journal of Statistical Software},
  year    = {2011},
  volume  = {45},
  number  = {3},
  pages   = {1--67}
}

@book{vanBuuren2018,
  title     = {Flexible Imputation of Missing Data},
  author    = {van Buuren, Stef},
  year      = {2018},
  edition   = {2},
  publisher = {Chapman and Hall/CRC},
  address   = {Boca Raton}
}

@book{LittleRubin2019,
  title     = {Statistical Analysis with Missing Data},
  author    = {Little, Roderick J. A. and Rubin, Donald B.},
  year      = {2019},
  edition   = {3},
  publisher = {Wiley},
  address   = {Hoboken}
}

@book{Schafer1997,
  title     = {Analysis of Incomplete Multivariate Data},
  author    = {Schafer, Joseph L.},
  year      = {1997},
  publisher = {Chapman and Hall/CRC},
  address   = {London}
}

@article{SchaferGraham2002,
  title   = {Missing Data: Our View of the State of the Art},
  author  = {Schafer, Joseph L. and Graham, John W.},
  journal = {Psychological Methods},
  year    = {2002},
  volume  = {7},
  number  = {2},
  pages   = {147--177}
}

@article{LedoitWolf2004,
  title   = {A Well-Conditioned Estimator for Large-Dimensional Covariance Matrices},
  author  = {Ledoit, Olivier and Wolf, Michael},
  journal = {Journal of Multivariate Analysis},
  year    = {2004},
  volume  = {88},
  number  = {2},
  pages   = {365--411}
}

@article{lu2025new,
  title     = {A New Multiple Imputation Method for High-Dimensional Neuroimaging Data},
  author    = {Lu, Tong and Kochunov, Peter and Chen, Chixiang and Huang, Hsin-Hsiung and Hong, L. Elliot and Chen, Shuo},
  journal   = {Human Brain Mapping},
  year      = {2025},
  volume    = {46},
  number    = {5},
  pages     = {e70161},
  publisher = {Wiley Online Library}
}

@article{Champion2003,
  title   = {Empirical {B}ayesian Estimation of Normal Variances and Covariances},
  author  = {Champion, Colin J.},
  journal = {Journal of Multivariate Analysis},
  year    = {2003},
  volume  = {87},
  number  = {1},
  pages   = {60--79},
  doi     = {10.1016/S0047-259X(02)00076-3}
}

\end{document}